\documentclass[a4paper,11pt]{article}
\let\counterwithout\relax

\usepackage{graphicx,subcaption}
\usepackage{mathrsfs,amssymb,amsmath,amsthm,textcomp}
\usepackage{microtype}
\usepackage[lined,boxed,commentsnumbered,ruled,vlined,noend,linesnumbered,boxed]{algorithm2e}
\bibliographystyle{plain}
\usepackage{authblk}
\usepackage{enumitem} 
\usepackage[bookmarks=true,pdfborder={0 0 0}]{hyperref}

\title{Geometric  Dominating-Set and Set-Cover via Local-Search}
\author[1]{Minati De\footnote{Partially supported by DST-INSPIRE Faculty grant DST-IFA-14-ENG-75,   IIT Delhi New Faculty SEED Grant  NPN5R and SERB-MATRICS grant MTR/2021/000584.}}
\affil{Deptartment of Mathematics\\ Indian Institute of  
Technology Delhi, India\\
\texttt{minati@maths.iitd.ac.in}
}
\author[2]{Abhiruk Lahiri}

\affil{Department of Computer Science and Automation\\ Indian Institute of Science, Bangalore, India\\
\texttt{abhiruk@iisc.ac.in}
}
\date{}

\usepackage{wrapfig}
\usepackage{IEEEtrantools}
\usepackage{color}

\makeatletter
\DeclareFontFamily{U}{tipa}{}
\DeclareFontShape{U}{tipa}{m}{n}{<->tipa10}{}
\newcommand{\arc@char}{{\usefont{U}{tipa}{m}{n}\symbol{62}}}%

\newcommand{\arc}[1]{\mathpalette\arc@arc{#1}}

\newcommand{\arc@arc}[2]{%
  \sbox0{$\m@th#1#2$}%
  \vbox{
    \hbox{\resizebox{\wd0}{\height}{\arc@char}}
    \nointerlineskip
    \box0
  }%
}
\makeatother

\makeatletter
\@addtoreset{footnote}{page}
\makeatother
\renewcommand{\thefootnote}{\ifcase\value{footnote}\or(*)\or
(**)\or(***)\or(****)\or(\#)\or(\#\#)\or(\#\#\#)\or(\#\#\#\#)\or($\infty$)\fi}

\newcommand{\IR}{\mathbb{R}}

\DeclareMathOperator{\interior}{int}
\DeclareMathOperator{\petal}{Petal}
\DeclareMathOperator{\NCpetal}{NCpetal}
\DeclareMathOperator{\CF}{CF}
\DeclareMathOperator{\NVD}{NVD}
\DeclareMathOperator{\Cell}{Cell}

\usepackage{chngcntr}
\counterwithout{paragraph}{subsubsection}

\setlength{\parskip}{1.2mm}
\setlength{\parindent}{0pt}
\setlength{\textheight}{9.3in}
\setlength{\textwidth}{6.3in}

\setlength{\oddsidemargin}{-0.1in}
\setlength{\topmargin}{-0.1in}
\newtheorem{theorem}{Theorem}
\newtheorem{lemma}{Lemma}

\newtheorem{definition}{Definition}

\newtheorem{property}{Property}
\newtheorem{invariant}{Invariant}
\newtheorem{claim}{Claim}

\everymath{\displaystyle}

\begin{document}
\maketitle
\begin{abstract}
In this paper, we study two classic optimization  problems: minimum geometric dominating 
set and set 
cover. In the dominating-set problem, for  a given set of objects in {the} plane as input, the 
objective is to choose a  minimum number of input objects such 
that every input object is dominated by the chosen set of objects. Here, one 
object is dominated by  {another} if both of them have {a} nonempty intersection 
region. 
For the second problem, for a given  set of points and  
objects {in a plane}, the objective is to choose {a} minimum number of objects
 to cover all the points. This is a special version of the set-cover problem. 

{Both problems have been well studied subject to various restrictions on the input objects.}
These problems {are} 
APX-hard {for object sets} consisting of axis-parallel rectangles, ellipses, $\alpha$-fat 
objects 
of constant description complexity, and convex polygons. On the other hand, 
PTASs (polynomial time approximation schemes) are known  {for object sets} consisting of disks or unit squares. 
Surprisingly, {a} PTAS was unknown even 
for arbitrary squares. 

 For both  problems 
obtaining a PTAS remains open for a large class of objects. 

For the dominating-set problem, we prove that {a} popular local-search 
algorithm leads to an  $(1+\varepsilon)$ approximation for object sets consisting of homothetic set of convex objects (which includes   arbitrary squares, $k$-regular polygons, translated and scaled copies of a convex set, etc.) in $n^{O(1/\varepsilon^2)}$ time. 
On the other 
hand, the same technique leads to a PTAS  for geometric covering problem when 
the 
objects are  convex pseudodisks (which includes disks, unit height rectangles, homothetic convex objects, etc.). 
As a consequence, we obtain an  easy to implement approximation algorithm for both problems for a large class of objects,  significantly improving the best 
known approximation guarantees. 

\end{abstract}

\section{Introduction}

\subsection{Problems Studied}
We consider two fundamental combinatorial optimization problems in a geometric context, dominating-set and set-cover. Let $\cal P$  be a subset of the real plane $\mathbb{R}^2$, and let $\mathscr{S}$ be a collection of subsets of $\cal P$, called \emph{objects}. A subset $\mathscr{S}' \subseteq \mathscr{S}$ is a \emph{dominating-set} if every element of $\mathscr{S}$ has a nonempty intersection with at least one element of $\mathscr{S}'$. A subset $\mathscr{S}'' \subseteq \mathscr{S}$ is a \emph{cover} if every point of $\cal P$ lies within at least one element of $\mathscr{S}''$. The \emph{dominating-set} and \emph{set-cover} problems involve computing a minimum cardinality dominating-set and set-cover, respectively. Both problems have a wealth of theoretical results and practical applications. 
Geometric set-cover 
problem has many application in real world for example wireless sensor 
networks, optimizing number of stops in an existing transportation network, job 
scheduling~\cite{BansalP14,ClarksonV07,HarPel12}.

\subsection{Local Search}
{It is well known that both of these problems are NP-hard in the most general setting, and hence researchers have focused on approximation algorithms. In this paper, we analyze an approach based on local search.}
Local search is a popular heuristic algorithm. This is an iterative  algorithm 
which starts with a feasible solution and improves the solution after each 
iteration until a locally optimal solution is reached. One big advantage of 
local search  is that it is very easy to implement and  easy to 
parallelize~\cite{Cohen-AddadM15}. As mentioned by Cohen-Addad and 
Mathieu~\cite{Cohen-AddadM15}, it is interesting to analyze such algorithms 
even when  alternative, theoretically optimal polynomial-time algorithms are 
known. 

\subsection{Our Results} 
{Our results on the dominating-set problem apply under the assumption that the input consists of homothets of a convex body in the plane, that is, the elements of $\mathscr{S}$ are equal to each other up to translation and positive uniform scaling. This includes a large class of natural object sets, such as collections of squares of arbitrary size, collections of regular $k$-gons of arbitrary size, and collections of circular disks of arbitrary radii.}
First, we show that   the standard local search 
algorithm leads to  a  polynomial time approximation scheme (PTAS)  for 
 computing  a minimum  dominating-set 
 of homothetic convex objects. 
For the analysis, we  
use {a} separator-based  technique, which was introduced independently by Chan {and}
Har-Peled~\cite{ChanH09} and  Mustafa  {and} Ray~\cite{MustafaR10}. The main 
part of this proof technique  is to show the existence of a planar graph 
satisfying a \emph{locality 
condition} {(to be defined in Section~\ref{anls})}. Gibson et al. \cite{GibsonP10} used the same paradigm where the 
objects were arbitrary disks. Inspired by their work, we ask whether we can generalize their framework to more general objects.
Our result on {the dominating-set problem} can be viewed as a non-trivial generalization of 
their result. 
To show the planarity, first, we decompose (or shrink) a set of homothetic 
convex 
objects (which are returned by the optimum algorithm and the local search 
algorithm) into a set of interior disjoint  objects so that each input 
object has a ``trace'' in this new set of objects. This
decomposition is motivated from the idea 
of core decomposition  introduced by Mustafa et al.~\cite{Ray}, and 
this technique could be of independent interest. 
Next, we   consider the nearest-site Voronoi diagram for this set of  
disjoint  objects with respect to the well-known convex distance 
function. The decomposition  ensures that each site has a nonempty cell in the  Voronoi diagram. Finally, we 
show that the dual of this Voronoi diagram satisfies the  {locality 
condition}.  Note that if  homothets of a centrally symmetric convex object are given, then one can avoid the disjoint decomposition, and the analysis is much simpler.

{Our results on the set-cover problem apply under the assumption that the input consists of a collection of  convex pseudodisks in the plane.}  A set of objects is said to be a collection of \emph{pseudodisks}, if the boundaries of every pair of them intersect at most twice. Note that this generalizes collections of  homothets. We use a 
similar technique as the previous one. First, we show that we can decompose (or 
shrink) a set of  pseudodisks (which are returned by the optimum algorithm and the local search 
algorithm) into a set of interior disjoint  objects so that each input 
point has 
a ``trace'' in this new set of objects.
We consider a graph ${\cal G}$ in which each vertex corresponds to a {shrunken} 
object, and {two vertices are joined by an} edge if the corresponding objects share an edge 
in their boundary. Since the shrunken objects are interior disjoint with each 
other, the graph ${\cal G}$ is planar. We prove that the graph ${\cal G}$ 
satisfies the locality condition.

{Given $\varepsilon > 0$, a \emph{$(1+\varepsilon)$-approximation algorithm} for the dominating-set (resp., set-cover) problem returns a dominating-set (resp., set-cover) whose cardinality is larger than the optimum by a factor of at most $(1+\varepsilon)$. Our results are given below.}

\begin{theorem}\label{Thm:DominatingSet}
{Given a set $\mathscr{S}$ of n convex homothets in $\IR^2$ and $\varepsilon > 0$, there exists a $(1+\varepsilon)$ approximation algorithm for dominated set based on local search that runs in time $n^{O(1/\varepsilon^2)}$.}
\end{theorem} 
\begin{theorem}\label{Thm:SetCover}
{Given a set $\mathscr{S}$ of n convex pseudodisks in $\IR^2$  and $\varepsilon > 0$, there exists a $(1+\varepsilon)$ approximation algorithm for set-cover based on local search that runs in time $n^{O(1/\varepsilon^2)}$.}
\end{theorem} 

\subsection{Related Works}

Our work is motivated by recent progress on approximability of various 
fundamental geometric optimization problems like finding maximum  independent 
sets~\cite{AdamaszekW14}, 
 minimum hitting set of geometric intersection graphs~\cite{MustafaR10}, and 
minimum geometric 
set 
covers~\cite{Ray}. 

{\bf Dominating-Set:} 
The minimum dominating-set problem is 
NP-complete for general graphs~\cite{GareyJ79}. From the result of Raz and 
Safra~\cite{RazS97}, it follows that it is NP-hard even to obtain {a} $(c\log 
\Delta)$-approximate dominating-set for general graphs, where 
$\Delta$ is the maximum degree of a node in the graph and $c\,(> 0)$ is any 
constant 
(see \cite{LenzenW10}).

Researchers have studied the problem for different graph 
classes like planar graphs, intersection graphs, bounded arboricity graphs, etc.
Recently, Har-Peled and Quanrud~\cite{Har-PeledQ15} proved that   local search produces
 a PTAS for graphs with polynomially 
bounded expansion. 
Gibson and Pirwani~\cite{GibsonP10} gave a

PTAS for the intersection graphs of arbitrary disks.
Unless $P=NP$~\cite{DinurS14}\footnote{Originally the assumption was 
$NP\nsubseteq DTIME(n^{O(\log\log n)})$. This  assumption was improved to 
$P\neq NP$ recently by Dinur and Steurer~\cite{DinurS14}.}, it is not possible 
{to compute a} 
$((1-\epsilon)\ln n)$-approximate dominating-set  in 
polynomial time  for $n$  homothetic 
polygons~\cite{Feige98, Viggo-Kann-Thesis,Leeuwen-Thesis}.
Erlebach and van Leeuwen \cite{ErlebachL08} proved that  the problem is 
APX-hard for the 
intersection graphs of axis-parallel rectangles, ellipses,  $\alpha$-fat 
objects of constant description complexity, and of convex polygons with 
$r$-corners ($r\geq 4$),  i.e., there is no PTAS for these  unless 
$P=NP$.

{Effort has been devoted to related problems involving various objects such as
squares, regular polygons, etc..}
Marx~\cite{Marx06} proved that the problem is $W[1]$-hard for unit 
squares, which implies that no 
efficient-polynomial-time-approximation-scheme (EPTAS) is possible unless 
$FPT=W[1]$ 
\cite{Marx08}. 
 The best known approximation factor for homothetic $2k$-regular 
polygons is $O(k)$ due to Erlebach and van Leeuwen \cite{ErlebachL08}, where 
$k>0$. They also obtained an $O(k^2)$-approximation algorithm for homothetic 
$(2k+1)$-regular polygons.
Even worse, for  the homothetic convex polygons where each 
polygons has $k$-corners, the best known result is $O(k^4)$-approximation. 
{Currently, there is no}  PTAS  even for arbitrary squares.
We consider the problem for a set of homothetic convex objects.

{\bf Set-Cover:}
The set-cover problem is known to be NP-complete~\cite{Karp}.
The geometric variant 
{has received a great amount of attention} due to its wide applications (for example the recent 
breakthrough of Bansal and Pruhs~\cite{BansalP14}). 
Unfortunately, the
geometric version of the problem also remains NP-complete even when the objects 
are unit disks or unit squares~\cite{ChanG, HochbaumM87}. 

Erlebach and van Leeuwen~\cite{ErlebachL10} obtained a PTAS for the geometric  
set-cover problem when the objects are unit squares. Recently, Chan and 
Grant~\cite{ChanG} showed that the  problem is APX-hard when the 
objects are  axis-aligned  rectangles.  They  extended the results to several 
other 
classes of objects including axis-aligned ellipses in $\IR^2$, axis-aligned 
slabs, 
downward shadows of line segments,  unit balls in $\IR^3$, axis-aligned 
cubes in $\IR^3$. A QPTAS {was developed by} Mustafa et. al.~\cite{Ray}  for the problem when the objects are 
pseudodisks. The current {state of the art} lacks a PTAS  when the objects are 
pseudodisks which includes  a large class of objects: arbitrary squares, 
arbitrary regular polygons, homothetic convex objects.

In the weighted setting, 
Varadarajan 
introduced the idea of quasi-uniform 
sampling to obtain {an} $O(\log \phi(OPT))$-approximation guarantees in the weighted 
setting 
for 
a large class of objects for which such guarantees were known in the 
{unweighted} case \cite{Varada}.  Here  $\phi(OPT)$ is the union complexity of 
the objects in the optimum set $OPT$. Very recently, Li and Jin  proposed a 
PTAS   for {the}
weighted version of the problem when the objects are unit disks \cite{LiJ}.

In \cite{HarPel12}, {the} authors  described a 
PTAS for the problem {of} computing a minimum cover of given points by a set of 
weighted fat objects, by allowing them to expand by some $\delta$-fraction.
  A 
{multi-cover variant of the problem (where each point is covered by at least k sets)} under 
geometric settings was studied in \cite{Chekuri}.

\subsection{Organization}

In Section~\ref{S-Alg}, we present a general  
algorithm based on the local search technique. For the sake of 
completeness, we present a high-level view of the analysis technique of local search
which was introduced by Chan \& 
Har-Peled~\cite{ChanH09} and Mustafa \& Ray~\cite{MustafaR10}.
In Section~\ref{tools}, we prove two results for a set of  pseudodisks 
which 
are common tools for analyzing both dominating-set and geometric set-cover 
problem. 
Thereafter, in 
Section~\ref{convPoly} and Section~\ref{Appendix} we prove    the locality condition 
for the dominating-set prolem when the objects are homothets of a convex polygon and of a centrally symmetric convex polygon, respectively.  In Section~\ref{setCover}, we prove the locality condition for {the} geometric set-cover problem when the  objects are convex  pseudodisks.

\subsection{{Notation and Preliminaries}}
Throughout the paper, we use capital  {letters to denote objects} and
 caligraphic font to denote   {sets of objects}.  {We make the general-position assumption that if two objects of the input set have a nonempty intersection, then their interiors intersect. No three object boundaries intersect in a common point.}
We denote the set $\{1,2,\ldots, n\}$ as $[n]$.
By a \emph{geometric object} (or object, in short) $R$, 
we 
refer  to a simply connected compact region  in $\IR^2$ with nonempty interior. 
In other words, 
the object $R$ is a closed region bounded by a closed 
Jordan curve $\partial{R}$. The $\interior(R)$ is defined as all the 
points in $R$ which do not appear in the boundary  $\partial{R}$. 
Given two objects $U$ and $V$, we say that $U$ has an \emph{interior overlap} with $V$ if $\interior(U) \cap \interior(V) \neq \emptyset$, and given a set of objects $\mathcal{V}$, we say that $U$ has an \emph{interior overlap} with $\mathcal{V}$ if $U$ has an interior overlap with any $V \in \mathcal{V}$.

For a   set of objects $\cal R$,  we define the 
\emph{cover-free region} of any object $R_i\in {\cal R}$  as $\CF(R_i, {\cal R})=  \bigcap_{\substack{R_j\in {\cal R}\\ R_j \neq R_i}} R_i \setminus R_j$. Note that $\CF(R_i,{\cal R})\cap R_j=\emptyset$ for all 
$R_i,R_j(i\neq j)\in {\cal R}$. When the  {underlying} set of objects $\cal R$ is obvious, we use the term 
$\CF(R_i)$ 
instead of $\CF(R_i, {\cal R})$.
A collection of geometric objects $\cal R$
is said to form a family of \emph{pseudodisks} if the boundary of any two 
objects cross each other at most twice.  A collection of  geometric 
objects $\cal R$ is said to be \emph{cover-free}
 if no object $R\in {\cal R}$ is covered by the union of the objects in ${\cal 
R}\setminus R$, in other words, $\CF(R, {\cal R})\neq \emptyset$ for all  
objects in $\cal R$. 
  Two objects  are \emph{homothetic} to each other if one object can be 
obtained from the other by scaling and translating.

 Consider the \emph{convex distance function} with respect to a convex object $C$ with a fixed interior point as \emph{center} as follows.

\begin{definition}\label{def_1_cd}
 Given $p_1, p_2 \in \mathbb{R}^2$, \emph{convex distance function} induced by $C$, denoted by $\delta_C(p_1,p_2)$, is the smallest $\alpha \geq 0$ such that $p_1, p_2 \in \alpha C$ while  the center of $C$ is at $p_1$. 
\end{definition}
 It was first introduced by Minkowski in 1911~\cite{KelleyN, ChewD85}. Note that this function satisfies the following  properties.

\begin{property}
\label{prop:convex_dist}
\begin{enumerate}[label=(\roman*)]
\item The function  $\delta_C$ is symmetric (i.e., $\delta_C(p_1,p_2)=\delta_C(p_2,p_1)$) if and only if $C$ is centrally symmetric.
\item Let  $p_1$ and $p_3$ be any two points in $\IR^2$ and let $p_2$ be 
any point 
on the line segment $\overline{p_1p_3}$, then   $\delta_C(p_1,p_3)= 
\delta_C(p_1,p_2)+\delta_C(p_2,p_3)$.
\item The distance function $\delta_C$ follows 
the triangular inequality, i.e.,  and $\delta_C(p_1,p_3)\leq 
\delta_C(p_1,p_2)+\delta_C(p_2,p_3)$, where  $p_1$, $p_2$ and 
$p_3$  are any three points in $\IR^2$.
\end{enumerate}
\end{property}

\section{Local-Search Algorithm}
\label{S-Alg}
We use a  standard local search algorithm~\cite{MustafaR10} as 
given in Algorithm~\ref{ALG-Code}.

\begin{algorithm}[h]
 \SetAlgoLined
\small
\KwIn{A set of $n$ objects $\mathscr{S}$ in $\IR^2$  { and a parameter $b$}}
Initialize ${\cal A}$ to an arbitrary  subset of $\mathscr{S}$ which is a 
feasible solution\;
\While{$\exists$ ${\cal X}\subseteq {\cal A}$ of size at 
most $b$, and 
 ${\cal X}'\subseteq \mathscr{S}$ of size at most $|{\cal X}|-1$ such 
that 
$({\cal A}\setminus {\cal X})\cup 
{\cal X}'$ is a feasible solution}
{
set ${\cal A}\leftarrow 
({\cal A}\setminus {\cal X})\cup 
{\cal X}'$\;
}
Report ${\cal A}$\;
\normalsize
 \caption{Local-Search($\mathscr{S},b$)}
\label{ALG-Code}
\end{algorithm}

A subset of objects ${\cal A}\subseteq \mathscr{S}$  is referred to  
{\it $b$-locally 
optimal} if one 
cannot obtain a smaller feasible solution  by removing a subset ${\cal 
X}\subseteq 
{\cal A}$ of 
size at most $b$ from ${\cal A}$ and replacing that with a subset of size at 
most 
$|{\cal X}|-1$ from $\mathscr{S}\setminus {\cal A}$.  Our algorithm  computes 
a 
$b$-locally optimal 
set of objects for $b=\frac{\alpha}{\epsilon^2}$, where $\alpha>0$ is a 
suitably large constant. Observe that at the end  of the while-loop, the set  $\cal A$ is $b$-locally 
optimal, and the set  $\cal A$ is 
cover-free. 

Since the size of ${\cal A}$ is decreased by at least one after each update in 
Line 3, 
 the number of iterations of the  while-loop is at most $n$, and
 each iteration takes $O(n^b)$ time as it needs to check every subset of size 
at most $b$.  So, this while-loop needs  $O(n^{b+1})$ time.  
Thus, total time complexity of  
the above algorithm is 
$O(n^{b+1})$.

\subsection{Analysis of Approximation}\label{anls}
 {We will be analyzing the algorithm's performance with respect to both problems. When there is a difference, we will indicate the specific context within which the analysis is being performed (set-cover or dominating-set). }
Let ${\cal O}$ be the optimal solution  and ${\cal A}$ 
be 
the solution 
returned by our local search algorithm. 
Note that both ${\cal O}$ and ${\cal A}$  ensure the following. 
\begin{claim}\label{cD0.0}
  {For any object $A\in  {\cal A}$ (resp., $O\in {\cal  O}$),  $\CF(A, {\cal A})$ (resp.,  $\CF(O, {\cal  O})$)  is nonempty. In other words, ${\cal A}$ (resp., ${\cal  O}$) is cover-free. } 
\end{claim}

We can assume that no object $S\in \mathscr{S}$ is properly contained in any other object of $\mathscr{S}$. We can ensure this by an initial pass over the input objects in which we remove any object of the input that is contained within another object. Thus, we can assume that there is no 
object $S\in 
\mathscr{S}\setminus {\cal A}$ which completely contains any object of ${\cal A}$. 
Similarly, we can 
assume that no object in ${\cal O}$ is completely contained in any object 
from 
$\mathscr{S}\setminus {\cal O}$.   Let ${\cal A}'={\cal A}\setminus {\cal O}$, ${\cal O}'={\cal O}\setminus 
{\cal A}$. 

In the context of the dominating-set problem,
let  $\mathscr{S}'\subset \mathscr{S}$ be the set containing  all objects of $\mathscr{S}$ which
are not dominated by any 
object in ${\cal A}\cap {\cal O}$. 
   Note that there does not exist an object $O\in {\cal O}'$ which covers $\CF(A_1, {\cal A}')\cup \CF(A_2, {\cal A}')$, $A_1,A_2\in {\cal A}'$, otherwise local search would replace  $A_1$ and $A_2$ by $O$. Similarly, there does not exist an object $A\in {\cal A}'$ which covers $\CF(O_1, {\cal O}')\cup \CF(O_2, {\cal O}')$, $O_1,O_2\in {\cal A}'$ otherwise it would contradict the optimality of ${\cal O}$.

Now we are going to eliminate {the} same number of objects from both ${\cal A}'$ and 
${\cal O}'$ to ensure that for any $A\in  {\cal A}'$,  $\CF(A, {\cal A}')$ is not properly
  contained in any object in ${\cal  O}'$.
 Let $O\in  {\cal O}'$ be an object that properly contains $\CF(A, {\cal A}')$ for an object $A\in {\cal A}'$. Let  $\mathscr{S}''$ be the the set containing  all objects of $\mathscr{S}'$ which
are not dominated by $O$. Note that both the sets ${\cal A}'\setminus A$ and ${\cal O}'\setminus O$ dominates $\mathscr{S}''$.
We reset $\mathscr{S}'\gets \mathscr{S}''$.
   We remove $A$ and $O$ from ${\cal A}'$ and ${\cal O}'$, respectively by  updating {${\cal A'}\gets{\cal A}'\setminus A$} and ${\cal O}'\gets {\cal O}'\setminus O$. We repeat this until there does not exist any object $O\in \cal{O}'$ that properly contains an object  $A\in {\cal A}'$. 
  
Similarly, if there exists an object $A\in {\cal A}'$ that properly contains $\CF(O, {\cal O}')$ for an object $O \in {\cal O}'$, we  update ${\cal A}'\gets{\cal A}'\setminus A$ and ${\cal O}'\gets{\cal O}'\setminus O$.  Let  $\mathscr{S}''$ be the the set containing  all objects of $\mathscr{S}'$ which
are not dominated by $A$. We reset $\mathscr{S}'\gets\mathscr{S}''$. We repeat this until there does not exist any object $A\in {\cal A}'$ that properly contains $\CF(O, {\cal O}')$ for an object $O \in {\cal O}'$. This ensures the following. 

\begin{claim}\label{cD0.1}
 For any object $A\in  {\cal A}'$ (resp., $O\in {\cal O}'$),  $\CF(A, {\cal A}')$ (resp.,  $\CF(O, {\cal O}')$)  is not properly
  contained in any object in ${\cal O}'$  (resp.,  $ {\cal A}'$).
\end{claim}

Observe that $|\cal{O}\setminus \cal{O}'|=|\cal{A}\setminus \cal{A}'|$.
Finally, we  will show that  $|{\cal A}'|\leq (1+\epsilon) |{\cal O}'|$ which implies 
that $|{\cal A}|\leq (1+\epsilon) |{\cal O}|$. 

In the context of geometric covering,  we do the similar process as discussed above to ensure Claim~\ref{cD0.1}.  Here, let ${\cal P}'$ be the set containing all 
points of $\cal P$ which  are
covered by
object in ${\cal A}'\cap {\cal O}'$.

{Henceforth, ${\cal A}',{\cal O}', {\cal P}'$ and $\mathscr{S}'$ will be denoted as ${\cal A},{\cal O}, {\cal P}$ and $\mathscr{S}$, respectively, satisfying both Claim~\ref{cD0.0} and \ref{cD0.1}.}

In Sections~\ref{loc-cond} and~\ref{setCover}, we prove \emph{locality conditions} for the dominating-set and set-cover problems, respectively. These conditions are presented in Lemmas \ref{lem_loc_dom} and \ref{lem_loc_cov}, respectively.

\begin{lemma}[Locality Condition for Dominating-Set]\label{lem_loc_dom}
 There exists a planar graph $ {\cal G}=({\cal A}\cup {\cal O}, {\cal E})$ 
such that for all 
$S\in \mathscr{S}$, if $S$ is dominated by at least one object of $\cal A$ and at least one object of $\cal O$, then there exists $A\in {\cal A}$ and $O\in {\cal O}$ both of which dominate $S$ and $(A,O)\in {\cal E}$.
\end{lemma}

\begin{lemma}[Locality Condition for Set-Cover]
\label{lem_loc_cov}
 There exists a planar graph $ {\cal G}=({\cal A}\cup {\cal O}, {\cal E})$ 
such that for all points
$p\in {\cal P}$, if $p$ is covered by at least one object  of $\cal A$ and at least one object of $\cal O$, then there exists $A\in {\cal A}$ and $O\in {\cal O}$ both of which cover $p$ and $(A,O)\in {\cal E}$.  
\end{lemma}

Once we have established both of these
locality condition lemmas, the analysis of the 
algorithm is same 
as in~\cite{MustafaR10}. 
For the sake of completeness, we provide the following analysis.
As  the graph ${\cal G}$ is planar, the following planar separator theorem can 
be used.

\begin{theorem}[Frederickson~\cite{Frederickson87}]\label{THseparator}
 For any planar graph ${\cal G}=({\cal V},{\cal E})$  {with} $n$ vertices  {and a parameter $1\leq r 
\leq n$},  there is 
a set ${\cal X}\subseteq 
{\cal V}$ of size at most $\frac{c_1n}{\sqrt{r}}$, such that  ${\cal V}\setminus 
{\cal X}$ can be 
partitioned into  {$\lceil n/r \rceil$ sets ${\cal V}_1,{\cal V}_2,\ldots {\cal V}_{\lceil n/r \rceil}$} 
satisfying (i) 
$|{\cal V}_i|\leq c_2r$, (ii) $N({\cal V}_i)\cap {\cal V}_j=\emptyset$ for 
$i\neq j$, and 
$|N({\cal V}_i)\cap {\cal X}|\leq c_3\sqrt{r}$, where $c_1,c_2,c_3>0$ are 
constants, and $N({\cal V}')=\{U\in {\cal V}\setminus {\cal V}'\mid \exists V\in 
{\cal V}' \text{ with } (U,V)\in 
{\cal E}  \}$.  
\end{theorem}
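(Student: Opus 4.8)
The plan is to derive the statement from the classical Lipton--Tarjan planar separator theorem by a recursive divide-and-conquer together with a boundary-refinement phase; this is exactly Frederickson's $r$-division scheme. Throughout I would use the weighted form of the basic separator theorem as a black box: for any planar graph $H$ on $m$ vertices equipped with nonnegative vertex weights of total mass $W$, there is a partition of $V(H)$ into $A\cup S\cup B$ such that there are no edges between $A$ and $B$, $|S|\le c\sqrt{m}$, and the weight of each of $A$ and $B$ is at most $\tfrac{2}{3}W$.

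First I would run the \emph{size-division} phase. Maintain a collection of vertex-disjoint \emph{pieces}, where each piece carries a set of \emph{boundary} vertices (those already split off and adjacent to the piece) and a set of \emph{interior} vertices. Start with the single piece consisting of all of $\mathcal{V}$ as interior and empty boundary. While some piece has more than $r$ interior vertices, apply the separator theorem to it with uniform weights, move the separator $S$ into the global set $\mathcal{X}$, and replace the piece by the two sub-pieces induced by $A$ and $B$ (each inheriting $S$ as boundary). When this halts, every piece has at most $r$ interior vertices, so there are $O(n/r)$ pieces. The size of $\mathcal{X}$ is controlled by a geometric-series argument over the recursion tree: at the scale where pieces have $\approx s$ interior vertices there are $O(n/s)$ of them, each contributing $O(\sqrt{s})$ separator vertices, so that scale contributes $O(n/\sqrt{s})$; summing over the geometrically decreasing scales $s=n,\tfrac{2}{3}n,\dots,r$ is dominated by the last term and gives $|\mathcal{X}|=O(n/\sqrt{r})$.

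Next I would run the \emph{boundary-reduction} phase to secure the per-region bound $|N(\mathcal{V}_i)\cap\mathcal{X}|\le c_3\sqrt{r}$, which the first phase does not guarantee since an individual piece may accumulate many boundary vertices from its ancestors' splits. For each piece whose boundary exceeds $\sqrt{r}$, re-apply the separator theorem, this time weighting \emph{only} the boundary vertices, and recurse until every sub-piece holds at most $c_3\sqrt{r}$ boundary vertices. A piece of interior size at most $r$ and boundary $\beta$ is thereby split into $O(\beta/\sqrt{r}+1)$ sub-pieces, each with boundary $O(\sqrt{r})$; since the total boundary $\sum_i\beta_i$ over all pieces is itself $O(n/\sqrt{r})$ (each created separator vertex bounds only a bounded number of pieces at the moment of its creation), the piece count stays $O(n/r)$ and the newly added separator vertices contribute another $O(n/\sqrt{r})$ to $\mathcal{X}$.

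Finally I would read off the conclusion. Let the $\mathcal{V}_i$ be the interiors of the final pieces and $\mathcal{X}$ the accumulated separator. Each $\mathcal{V}_i$ has at most $r\le c_2 r$ vertices, proving (i); every edge leaving a piece has an endpoint in $\mathcal{X}$, so distinct interiors are non-adjacent, giving $N(\mathcal{V}_i)\cap\mathcal{V}_j=\emptyset$, and $|N(\mathcal{V}_i)\cap\mathcal{X}|\le c_3\sqrt{r}$ holds by the second phase, establishing (ii). The number of pieces is $O(n/r)$, which I would normalize to exactly $n/r$ by merging or padding with empty classes (adjusting $c_2$). I expect the genuine obstacle to be the accounting in the boundary-reduction phase: one must verify simultaneously that forcing each region's boundary down to $O(\sqrt{r})$ inflates neither the piece count past $O(n/r)$ nor the total separator past $O(n/\sqrt{r})$, and it is precisely the bound $\sum_i\beta_i=O(n/\sqrt{r})$ inherited from Phase~1 that lets both estimates close.
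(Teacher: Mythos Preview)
The paper does not prove this theorem at all: it is stated with attribution to Frederickson~\cite{Frederickson87} and used purely as a black box in the local-search analysis, so there is no ``paper's own proof'' against which to compare your proposal.

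That said, your sketch is a faithful outline of Frederickson's actual $r$-division argument (recursive Lipton--Tarjan separators followed by a boundary-reduction pass), and the accounting you flag as the delicate point---that $\sum_i \beta_i = O(n/\sqrt{r})$ keeps both the piece count and the separator size under control in Phase~2---is indeed the crux of the original proof. For the purposes of this paper nothing more is needed than the citation.
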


We apply Theorem~\ref{THseparator} to the graphs described in Lemmas~\ref{lem_loc_dom} and~\ref{lem_loc_cov}, setting $r = b/c_2$, where $c_2$ is the constant of Theorem~\ref{THseparator}.   Here,  $n = |{\cal A}| + |{\cal O}|$ and $r = c_4/\epsilon^2$, for some constant $c_4$. So, $|{\cal 
V}_i| \leq 
b$.
Let ${\cal A}_i={\cal A}\cap {\cal V}_i$ and ${\cal O}_i={\cal O}\cap 
{\cal 
V}_i$.  Note that we must have 
\begin{align}\label{eq0}
|{\cal A}_i|\leq |{\cal O}_i|+|N({\cal V}_i)\cap {\cal X}|,
\end{align}

otherwise our local search would  {continue to} replace ${\cal A}_i$ by 
${\cal O}_i\cup N({\cal V}_i)$ {, resulting in a better solution}.  For a suitable constant $c_5$, we now have
{\small
\begin{IEEEeqnarray*}{rcl's}
|{\cal A}|
& \leq & |{\cal X}|+ \sum\limits_{i}|{\cal A}_i| & (Each element of ${\cal 
Q}$ 
either belongs to 
${\cal A}_i$ or ${\cal X}$)\\
& \leq & |{\cal X}| + \sum\limits_{i}|{\cal O}_i|+ \sum\limits_{i}|N({\cal 
V}_i)\cap {\cal X}| & (Follows from 
Equation~\ref{eq0}) \\
& \leq & |{\cal O}|  + |{\cal X}| +  \sum\limits_{i}|N({\cal V}_i)\cap {\cal X}| & 
(${\cal 
O}_i$ are disjoint subsets 
of ${\cal O}$)\\
& \leq & |{\cal O}| + \frac{c_5(|{\cal A}|+|{\cal O}|)}{\sqrt{b}} & 
($ \sum\limits_{i}|N({\cal V}_i)\cap {\cal X}|\leq \lceil {n/r} \rceil( c_3 \sqrt{r})$ and $|{\cal X}|\leq c_1 (|{\cal A}|+ |{\cal O}|)/ \sqrt{r}$ )\\
|{\cal A}| & \leq & \frac{1+c_5/\sqrt{b}}{1-c_5/\sqrt{b}}|{\cal O}| & (By 
rearranging)\\
|{\cal A}| & \leq & (1+\epsilon)|{\cal O}| & ($b$ is large enough constant 
times 
$\frac{1}{\epsilon^2}$). 
\end{IEEEeqnarray*}
}

\section{Tools for Constructing Disjoint  Objects}\label{tools}
In this section, we present two tools (or Lemmata) which are {essential} for 
analyzing our main results. 
{An important step in our analysis (and particularly in the construction of the planar graph of Section~\ref{anls}) involves replacing a collection of overlapping objects that cover a given region with a collection of non-overlapping objects that cover the same region. This leads to the notion of a {\it decomposition}.}
The {decomposition}, we define here, is inspired by the idea of  
core decomposition introduced by Mustafa et al.~\cite{Ray}. 
\begin{definition}\label{cdDef}
{Given a set of convex objects ${\cal R} = \{R_1, \ldots, R_n\}$, a set $\widetilde{\cal R}=\{ {{\widetilde{R}}_1},\ldots,{{\widetilde{R}}_n}\}$ of  convex objects is called a \emph{sub-decomposition} if for each $i \in [n]$, ${{\widetilde{R}}_i}\subseteq R_i$.
Such a set $\widetilde{\cal R}$ is called a \emph{decomposition} if the same region is covered, that is, $\bigcup_{i \in [n]}  {{\widetilde{R}}_i} = \bigcup_{i \in [n]} R_i$. We refer ${\widetilde{R}}_i$ as the \emph{trace} of $R_i$, $i \in [n]$. Further, if the elements of $\widetilde{\cal R}$ have pairwise disjoint interiors, the decomposition/sub-decomposition is said to be \emph{disjoint}.}
\end{definition}

First, we prove the 
following lemma which is a reminiscent of \cite[Lem 3.3]{Ray}.  {Edelsbrunner~\cite{Edelsbrunner95} introduced a very similar decomposition in the context of Euclidean disks.}


\begin{lemma} \label{lDecompose1}
For a cover-free set  of convex pseudodisks ${\cal R}=\{ 
R_1,\ldots,R_n\}$, 
there exist a disjoint decomposition $\widetilde{\cal R}=\{ 
{{\widetilde{R}}_1},\ldots,{{\widetilde{R}}_n}\}$ such that $\CF(R_j, {\cal R})\subseteq 
{{\widetilde{R}}_j}$, for all $j\in [n]$.
\end{lemma}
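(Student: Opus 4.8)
The plan is to build $\widetilde{\cal R}$ by cutting each $R_i$ along the chords it shares with the other objects, keeping, for every pair, the part that each object ``dominates''. Concretely, for two intersecting objects $R_i,R_j$ I would look at the two points where $\partial R_i$ and $\partial R_j$ cross (there are at most two, since the family is a set of pseudodisks) and let $\ell_{ij}$ be the line through them. This line splits the lens $R_i\cap R_j$ into two caps: on one side of $\ell_{ij}$ one has $R_j\cap H\subseteq R_i$ and on the other side $R_i\cap H\subseteq R_j$. Writing $H_{ij}$ for the closed half-plane bounded by $\ell_{ij}$ on the side where $R_j$ is contained in $R_i$, I would define
\[
\widetilde{R_i}\;=\;R_i\cap\!\!\bigcap_{\substack{j\neq i\\ R_j\cap R_i\neq\emptyset}}\!\! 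H_{ij},
\]
and make no cut for a non-intersecting pair. This is the direct generalisation of the power diagram of disks to convex pseudodisks, and it is essentially the construction forced on us once we insist that each piece stay convex and that the union be preserved.

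First I would dispose of the easy properties. Cover-freeness is used right away: if $R_i\subseteq R_j$ then $CF(R_i)=\emptyset$, contradicting that $\cal R$ is cover-free, so no object is contained in another; together with the pseudodisk condition this forces every intersecting pair to cross in exactly two points, and hence each $\ell_{ij}$ and each dominating side $H_{ij}$ is well defined. The \emph{subset} and \emph{convex} properties are then immediate, since $\widetilde{R_i}$ is an intersection of the convex set $R_i$ with half-planes. \emph{Interior-disjointness} is also immediate: for an intersecting pair the two dominating sides are the opposite sides of $\ell_{ij}$, so $\widetilde{R_i}\subseteq H_{ij}$ and $\widetilde{R_j}\subseteq H_{ji}$ lie in complementary half-planes. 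For the inclusion $CF(R_i)\subseteq\widetilde{R_i}$, observe that $R_i\setminus H_{ij}\subseteq R_j$; thus any point of $R_i$ removed by the cut against $R_j$ already lies in $R_j$ and is therefore not cover-free. Hence $CF(R_i)$ lies in every $H_{ij}$, so $CF(R_i)\subseteq\widetilde{R_i}$, and this set is moreover non-empty by cover-freeness.

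The hard part will be the \emph{union} property $\bigcup_i\widetilde{R_i}=\bigcup_i R_i$; the inclusion ``$\subseteq$'' is trivial, so everything rests on ``$\supseteq$''. Fix $p$ in the union and let $S(p)=\{i:p\in R_i\}$. Using $R_i\setminus H_{ij}\subseteq R_j$ again, if $j\notin S(p)$ then $p\in H_{ij}$ automatically, so $p$ can only be cut away from $R_i$ by a chord against another member of $S(p)$. Define a tournament on $S(p)$ by declaring that $i$ \emph{beats} $j$ (written $i\succ_p j$) when $p\in H_{ij}$; then $p\in\widetilde{R_i}$ exactly when $i$ beats every other element of $S(p)$, i.e. when the tournament has a Condorcet winner. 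Thus the union property is equivalent to the assertion that this pointwise domination tournament always has a winner, for which it suffices to prove it is transitive, i.e. has no directed $3$-cycle.

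Proving acyclicity is where the pseudodisk hypothesis is essential, and I expect it to be the real obstacle. A $3$-cycle $A\succ_p B\succ_p C\succ_p A$ is a rock–paper–scissors configuration in which the excess caps $A\setminus B$, $B\setminus C$, $C\setminus A$ are arranged rotationally around $p$. I would rule this out by passing to the radial functions $\rho_A,\rho_B,\rho_C$ measured from $p$: the pseudodisk condition says each difference $\rho_i-\rho_j$ changes sign at most twice on the circle of directions, so each set $\{\rho_i>\rho_j\}$ is a single arc, and I would show that a cyclic arrangement of the three chords is incompatible with three such single arcs, forcing some pair of boundaries to cross a third time. For disks this is automatic, since the chords are radical axes meeting at the radical centre and the potentials $\lvert p-o_i\rvert^2-r_i^2$ linearly order the objects; the content of the lemma is that the weaker pseudodisk crossing bound already excludes the ``pinwheel'' configurations (for instance three long overlapping boxes, whose boundaries cross more than twice) that would otherwise create a cycle. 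Once acyclicity is in hand, every point of the union is claimed by its unique dominating object, the union property follows, and $\widetilde{\cal R}$ is the desired disjoint union decomposition.
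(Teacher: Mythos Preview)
Your reduction is clean and correct up to the last step: with $H_{ij}$ defined as the closed half-plane containing $R_i\setminus R_j$, one does have $R_i\setminus H_{ij}\subseteq R_j$, the pieces $\widetilde{R_i}$ are convex, pairwise interior-disjoint, and contain $CF(R_i)$, and the union property is indeed equivalent to the domination tournament at each point having a Condorcet winner. The gap is the acyclicity of this tournament, for which your sketch does not go through. Passing to radial functions from $p$, one checks that $i\succ_p j$ holds precisely when the arc $\{\theta:\rho_i>\rho_j\}$ has angular width larger than $180^\circ$ (this is where the chord separates $p$ from the short arc). The pseudodisk hypothesis guarantees only that each such set is a \emph{single} arc; it does \emph{not} force ``some pair of boundaries to cross a third time'' in a $3$-cycle. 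Concretely, the sign pattern
\[
\{\rho_A<\rho_B\}=(0^\circ,150^\circ),\quad
\{\rho_B<\rho_C\}=(110^\circ,260^\circ),\quad
\{\rho_C<\rho_A\}=(220^\circ,10^\circ)
\]
is internally consistent: at every angle the three values are totally ordered, no triple of strict inequalities cycles, and each pairwise comparison changes sign exactly twice, yet all three ``small'' arcs have width $150^\circ<180^\circ$, giving $A\succ_p B\succ_p C\succ_p A$. So the purely combinatorial argument you outline cannot close the case; you would need a genuinely geometric obstruction coming from the \emph{convexity} of $A,B,C$ (not just the two-crossing count), and nothing in the sketch supplies one. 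Absent that, the union property is unproved and a point of $\bigcup_i R_i$ may lie in no $\widetilde{R_i}$.

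The paper sidesteps exactly this difficulty by abandoning the global, all-at-once cut. It processes the objects sequentially: in phase~$i$ it takes the \emph{current} $\widetilde{R_i}^{\,i-1}$ and, for each object still overlapping it, replaces the two interior boundary arcs by the common chord, making $\widetilde{R_i}^{\,i}$ interior-disjoint from everything while preserving the union (Invariant~1). The real work is Invariant~2: after every phase the family remains a family of pseudodisks, so that in the next phase the two crossing points---and hence the chord---are still well defined. That invariant is proved by a short case analysis (Claim~1) on how two neighbours' intervals along $\partial X$ can overlap. The sequential scheme thus never needs a global acyclicity statement; each phase is a local, pairwise operation whose correctness is immediate, and the pseudodisk invariant is what carries the argument forward.
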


\begin{proof}
The proof is constructive.
The algorithm to construct a disjoint decomposition    $\widetilde{\cal 
R}=\{ 
{{\widetilde{R}}_1},\ldots,{{\widetilde{R}}_n}\}$  of ${\cal R}=\{ R_1,\ldots,R_n\}$ 
is as  follows. This is an $n$-phase algorithm.
After the $i^{th}$ phase, the  following invariants are maintained, for all  
$i\in [n]$.

\begin{invariant}\label{inv1.1}
The objects in 
$\widetilde{\cal R}^{i}=\{{{\widetilde{R}}_1}^i,\ldots,{{\widetilde{R}}_n}^i\}$ form a 
decomposition of ${\cal R}=\{ 
R_1,\ldots,R_n\}$ such that (i) $\CF(R_j)\subseteq {{\widetilde{R}}_j}^i$ for all 
$j\in [n]$, and  (ii) 
$\interior({{\widetilde{R}}_t}^i) \cap \interior({{\widetilde{R}}_q}^i)=\emptyset$ 
where $t\neq q$ and $1\leq 
t \leq i$, $1\leq q \leq n$. 

\end{invariant}

\begin{invariant}\label{inv1.2}
The objects in $\widetilde{\cal 
R}^{i}=\{{{\widetilde{R}}_1}^i,\ldots,{{\widetilde{R}}_n}^i\}$ form a {collection of} convex pseudodisks. 
\end{invariant}

We initialize $\widetilde{\cal R}^0={\cal R}$. This satisfies both 
invariants.
At the beginning of the
$i^{th}$ phase,  we set $X=\widetilde{ 
R}_i^{i-1}$. 
Let ${\cal 
R}_{\pi}^i=\{{{\widetilde{R}}_{\pi(1)}}^{i-1}, \ldots, 
{{\widetilde{R}}_{\pi(\ell)}}^{i-1}\}$, $0\leq \ell<n$
be the set of  objects in  $\widetilde{\cal 
R}^{i-1}$ 
that intersect  $\interior({{\widetilde{R}}_i}^{i-1})$. In other words,
$\interior({{\widetilde{R}}_i}^{i-1})\cap 
\interior({{\widetilde{R}}_{\pi(j)}}^{i-1})\neq \emptyset$ for any  $\pi(j)\in \Pi$, 
where $\Pi=\{\pi(1), \ldots, \pi(\ell)\}$.

Consider any  object ${{\widetilde{R}}_{\pi(j)}}^{i-1}\in {\cal R}_{\pi}^i$. 
As ${{\widetilde{R}}_{\pi(j)}}^{i-1}$  and $X$ are pseudodisks, 
their respective boundaries intersect in two points.
 Let 
$p_1$ and $p_2$ be these  two intersection points. 
By convexity, the line segment $\overline{p_1p_2}$  is contained in both
 ${{\widetilde{R}}_{\pi(j)}}^{i-1}$  and $X$.
Let ${\cal C}_1$ (respectively, 
${\cal C}_2$) 
be the part of the boundary of ${{\widetilde{R}}_{\pi(j)}}^{i-1}$ (respectively, 
$X$) 
that lie
inside $X$ (respectively, ${{\widetilde{R}}_{\pi(j)}}^{i-1}$).
 We replace  both ${\cal C}_1$ and 
${\cal C}_2$ by the line segment $\overline{p_1p_2}$. In this way, we obtain new  convex objects 
${{\widetilde{R}}_{\pi(j)}}^{i}\subseteq {{\widetilde{R}}_{\pi(j)}}^{i-1}$ and $X_j\subseteq X$  that have interiors that are  pairwise disjoint  with each other, and 
${{\widetilde{R}}_{\pi(j)}}^{i}\cup X_j={{\widetilde{R}}_{\pi(j)}}^{i-1}\cup X$.  See 
Figure~\ref{figDecompose2} for illustration. 

\begin{figure}[!htb]
    \centering
    \begin{minipage}{.5\textwidth}
        \centering
        \includegraphics[width=0.5\linewidth, height=0.15\textheight,page=1]{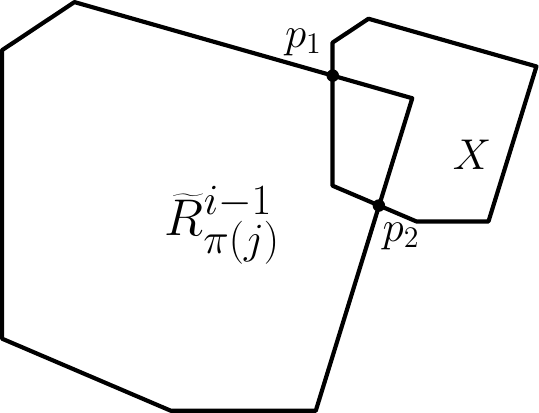}
      
    \end{minipage}%
    \begin{minipage}{0.5\textwidth}
        \centering
        \includegraphics[width=0.5\linewidth, height=0.15\textheight,page=2]{u-decompose1.pdf}
    \end{minipage}
     \caption{{Illustration of Lemma~\ref{lDecompose1}}.}
      \label{figDecompose2}
\end{figure}

For all $\pi(j)\in \Pi$, we 
construct the corresponding ${{\widetilde{R}}_{\pi(j)}}^{i}$ as above. 
At the end of this phase, we assign ${{\widetilde{R}}_i}^{i}=\bigcap_{j\in \Pi} X_j$. Note  that ${{\widetilde{R}}_i}^{i}$ is also convex as it is intersection of some convex objects.
We  set  ${{\widetilde{R}}_j}^{i}={{\widetilde{R}}_j}^{i-1}$ for 
all $j(\neq i)\in [n] \setminus \Pi$.  As a result, we obtain a collection of convex objects  $\widetilde{\cal R}^{i}$.

Observe that, for any point $p$ that is contained in the union of ${\cal 
R}_{\pi}^i$, either there exists a $j$ such that this point lies within ${{\widetilde{R}}_{\pi(j)}}^{i}$, and so is covered by this set, or it lies within $X_j$ for all $j$, and hence it lies within their common intersection, which is $X$. So, $\widetilde{\cal R}^i$  is a decomposition of $\widetilde{\cal R}^{i-1}$.

Thus, after the 
$i^{th}$ phase, 
we find a decomposition $\widetilde{\cal R}^i$  such that 
$\interior({{\widetilde{R}}_i}^i)\cap \interior({{\widetilde{R}}_j}^i)=\emptyset$ for 
all 
$j(\neq i)\in \{1,\ldots, n\}$. On the other hand, we have 
$\interior({{\widetilde{R}}_t}^{i-1}) \cap  
\interior({{\widetilde{R}}_q}^{i-1})=\emptyset$ where $t\neq q$ and $1\leq t \leq 
i-1$, $1\leq 
q 
\leq n$. Combining these,  we obtain  $\interior({{\widetilde{R}}_t}^i) 
\cap \interior(  
{{\widetilde{R}}_q}^i)=\emptyset$ where $t\neq q$ and $1\leq t \leq i$, $1\leq q 
\leq n$.

Since the union of objects 
in  $\widetilde{\cal R}^{i}$  is same as the 
union of the  objects in  $\widetilde{\cal R}^{i-1}$, and the 
objects in $\widetilde{\cal R}^{i-1}$ are cover-free,  so each object  
${{\widetilde{R}}_j}^{i}$
has its cover-free region $\CF(R_j)$  which is not covered by others, for all $j\in [n]$.  Thus, 
Invariant~\ref{inv1.1} is maintained.
Now, we prove that Invariant~\ref{inv1.2} is also 
maintained.
We prove the objects in 
$\widetilde{\cal R}^i$ form pseudodisks by showing the 
following claim.

\begin{claim}\label{C1}
$\widetilde{\cal R}^i$ is a collection of convex pseudodisks.
\end{claim}

\begin{figure}
  \centering
  \begin{minipage}{.4\linewidth}
    \centering
     \subcaptionbox*{(a) Case 1}
      {\includegraphics[width=\linewidth,page=5]{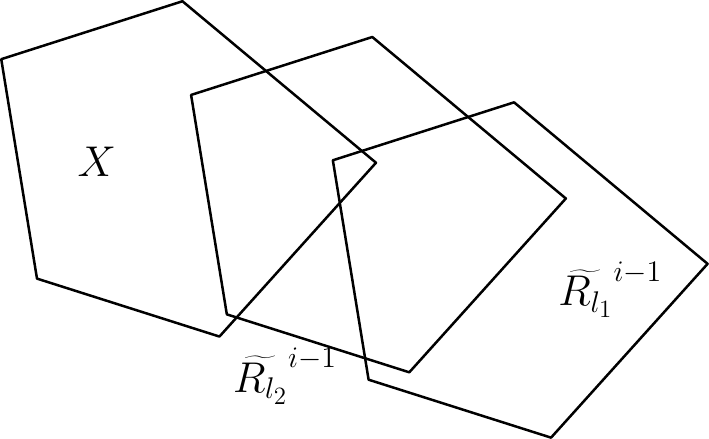}}

     \subcaptionbox*{(b) Case 2}
      {\includegraphics[width=\linewidth,page=1]{u-de-cases.pdf}}
      
       \subcaptionbox*{(c) Case 3}
      {\includegraphics[width=\linewidth,page=3]{u-de-cases.pdf}}
\caption*{Before Phase $i$}
  \end{minipage}\quad
   \begin{minipage}{.4\linewidth}
    \centering

   \subcaptionbox*{(a) Case 1}
      {\includegraphics[width=\linewidth,page=6]{u-de-cases.pdf}}
 	  \subcaptionbox*{(b) Case 2}
      {\includegraphics[width=\linewidth,page=2]{u-de-cases.pdf}}
      
       \subcaptionbox*{(c) Case 3}
      {\includegraphics[width=\linewidth,page=4]{u-de-cases.pdf}}

    \caption*{After Phase $i$}
  \end{minipage}
 \caption{Illustration of Claim~\ref{C1}.}
 \label{figDecompose}

\end{figure}

 \begin{proof}
It suffices to show that for any two objects ${{\widetilde{R}}_{\ell_1}}^{i-1}$ 
 and ${{\widetilde{R}}_{\ell_2}}^{i-1}$ in ${R~}^{i-1}$, their boundaries $\partial{{\widetilde{R}}_{\ell_1}}^{i}$ and  $\partial {{\widetilde{R}}_{\ell_2}}^{i}$ can cross each other at most twice. 
 
 Recall the definition of $X$ from the above construction.
For any 
$R\in {\cal R}_{\pi}^i$, let $I(R)$  be the interval $R\cap \partial X$ on 
the 
boundary of $X$. Due to the Invariant~\ref{inv1.1},  no pseudodisk in $\widetilde{\cal R}^{i-1}$ is completely 
contained in another pseudodisk, so the intervals are well defined.

 There are three possible cases: 

\begin{itemize}
 \item Case 1:  $I({{\widetilde{R}}_{\ell_1}}^{i-1})\cap 
I({{\widetilde{R}}_{\ell_2}}^{i-1})=\emptyset$,
\item Case 2:  $I({{\widetilde{R}}_{\ell_1}}^{i-1})\subseteq
I({{\widetilde{R}}_{\ell_2}}^{i-1})$,
\item Case 3:  $I({{\widetilde{R}}_{\ell_1}}^{i-1})\cap 
I({{\widetilde{R}}_{\ell_2}}^{i-1})\neq \emptyset$ and 
$I({{\widetilde{R}}_{\ell_1}}^{i-1})\nsubseteq
I({{\widetilde{R}}_{\ell_2}}^{i-1})$.
\end{itemize}
In both Case 1 and Case 2 (see Figure~\ref{figDecompose}(a) and (b)),  $\partial{{\widetilde{R}}_{\ell_1}}^{i}$ and  
$\partial{{\widetilde{R}}_{\ell_2}}^{i}$ do not have any new crossing which  
$\partial{{\widetilde{R}}_{\ell_1}}^{i-1}$ and  
$\partial{{\widetilde{R}}_{\ell_2}}^{i-1}$ did not have.
 In fact they may lost intersections lying in $X$. As 
$\partial {{\widetilde{R}}_{\ell_1}}^{i-1}$ 
 and $\partial {{\widetilde{R}}_{\ell_2}}^{i-1}$  may cross each other at most 
twice, so does $\partial{{\widetilde{R}}_{\ell_1}}^{i}$ and  
$\partial{{\widetilde{R}}_{\ell_2}}^{i}$.
In Case 3 (see Figure~\ref{figDecompose}(c)),   $\partial{{\widetilde{R}}_{\ell_1}}^{i-1}$ and  
$\partial{{\widetilde{R}}_{\ell_2}}^{i-1}$ crosses each other once in $X$ and once outside $X$. The outside crossing remains same for $\partial{{\widetilde{R}}_{\ell_1}}^{i}$ and  
$\partial{{\widetilde{R}}_{\ell_2}}^{i}$, and they cross each other once along  new part of  their boundaries, i.e.,  along the boundary of $X_{\ell_1}\cap X_{\ell_2}$. Thus, the claim follows.

\end{proof} 
 
 After completion 
of the $n^{th}$ phase, we 
assign $\widetilde{\cal R}=\widetilde{\cal R}^n$.
The proof of the lemma follows from the  Invariant~\ref{inv1.1}.
\end{proof}


Now, we prove the following important lemma which we use as a tool for obtaining disjoint sub-decompositions. The previous lemma is used to obtain disjoint decomposition when the objects are pseudodisks.  When the set of objects does not satisfy the pseudodisk property, but they are shrunken  from a set of of pseudodisks, we apply the following tool to obtain a disjoint sub-decomposition.

\begin{lemma} \label{petalLemma}
Given two sets ${\cal U}$ and ${\cal V}$ of distinct convex objects such that their union forms a 
collection  of pseudodisks,  let 
${{\cal U}^0}$ and ${{\cal V}^0}$ be any disjoint sub-decompositions of ${\cal U}$ and ${\cal V}$, respectively.
Let $U_i$ and $V_j$ be any two convex  pseudodisks from ${\cal U}$ and ${\cal V}$, respectively, and ${U^0_i}$ 
and ${V^0_j}$ be 
two corresponding convex objects  from  ${{\cal U}^0}$ and ${{\cal V}^0}$, respectively,   such that $\CF({U^0_i}, {{\cal U}^0}\cup {{\cal V}^0})\neq \emptyset$, 
$\CF({V^0_j}, {{\cal U}^0}\cup {{\cal V}^0})\neq \emptyset$ and 
$\interior({U^0_i})\cap 
\interior({V^0_j})\neq 
\emptyset$. Then we can find ${U^0_{ij}}\subseteq {U^0_i}$ 
and 
${V^0_{ji}}\subseteq {V^0_j}$ such that the following 
properties are 
satisfied.

\begin{itemize}

 \item[(i)] ${U^0_{ij}}$ and ${V^0_{ji}}$ are convex,  have nonempty disjoint interiors, and their intersection consists of a separating~line segment, which we denote by $E^0_{ij}$.

\item[(ii)] ${U^0_i}\setminus {U^0_{ij}}$ is completely 
contained in 
$V_j$.
\item[(iii)] ${V^0_j}\setminus {V^0_{ji}}$ is completely 
contained in $U_i$.
\end{itemize}

\end{lemma}
\begin{figure}
 
    \centering

  \includegraphics[page=1]{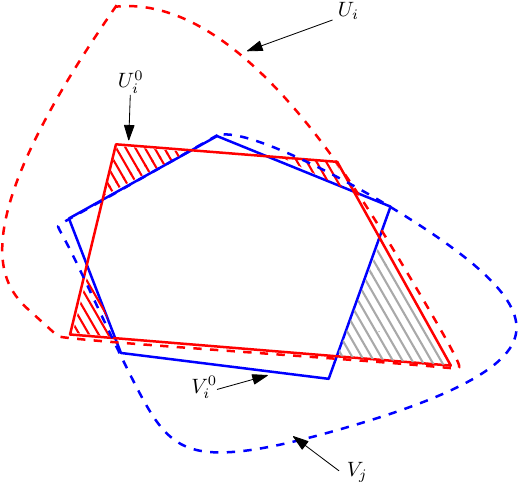}
   \caption{Petals: tiled regions are Petals of $U_i^0$; $\NCpetal$s are marked with red.}
   \label{figPetals}
\end{figure}

\begin{proof}

Given two convex   objects $U$ and $V$, define a \emph{petal} of $U$ with respect to $V$ to be a connected component of $U \setminus  V$. 
Since $U_i^0$ and $V_j^0$ need not be pseudodisks, there may be multiple petals of $U_i^0$ with respect to $V_j^0$. 
Let us assume that there are $k$ such petals, which we denote by $\petal_t(U_i^0)$, for $1\leq t\leq k$. 
Thus, $U_i^0 \setminus V_j^0 = \bigcup_{t=1}^k \petal_t(U_i^0)$. Similarly, we define $\petal(V_j^0)$ to be the set of petals of $V_j^0$ with respect to $U_i^0$ , and we let $k'$ denote their number.
Observe that each petal is bounded by two boundary arcs, one from $\partial U_i^0$ and the other from
$\partial V_j^0$ (see Figure~\ref{figPetals}). Also observe that consecutive petals are defined by consecutive intersection points between the boundaries of the two objects.

Since $V^0_j \subseteq V_j$, we have  ${U^0_i}\setminus V_j \subseteq {U^0_i}\setminus{V^0_j}$.
Define $\NCpetal({U^0_i})$ to be the subset of petals of $U_i^0$ (with respect to $V_j^0$) that are not entirely covered by $V_j$, that is, $\NCpetal({U^0_i})=\{\petal_t({U^0_i})\in \{{U^0_i}\setminus{V^0_j}\}|\petal_t({U^0_i})\cap \{{U^0_i}\setminus V_j\}\neq \emptyset\}$.
Similarly, we define  $\NCpetal({V^0_j})$. 
Because $\CF({U^0_i}, {{\cal U}^0}\cup {{\cal V}^0})\neq \emptyset$, $\NCpetal({U^0_i})$ contains at least one element, and the same holds for $\NCpetal({V^0_j})$ (see Figure~\ref{figPetals}).

Consider only the uncovered petals (that is, $\NCpetal(U_i^0) \cup \NCpetal(V^0_j)$). Let us label the
petals of $\NCpetal(U_i^0)$ with the letter ``u" and label the petals of $\NCpetal(V^0_j)$ with the letter
``v".  Let $R^0_{ij}={U^0_i}\cap {V^0_j}$. If you consider the cyclic order of these petals around $\partial R^0_{ij}$, 
the alternating pattern ``u\dots v\dots u\dots v” cannot occur in the cyclic sequence as shown in the following argument (see Figure~\ref{fig_petalLemma}).

\begin{itemize}
    \item[] 
Suppose to the contrary that the alternating pattern ``u\dots v\dots u\dots v” occurs in the cyclic sequence. Then there must exist points $u_1,u_2$ (from the first and third ``u" petals in the sequence) that lie
in $U_i^0 \setminus V^0_j$. Similarly, there exist points $v_1, v_2$ (from the second and fourth ``v" petals) that lie in $V^0_j \setminus U_i^0$. 
  Because of the alternation, the line segments
 $\overline{u_1u_2}$ and $\overline{v_1v_2}$ intersect in $R^0_{ij}$. 
However, the existence of these two line segments violates the hypothesis that $U_i$ and $V_j$ are pseudodisks.
\end{itemize}

Since the alternation pattern ``u\dots v\dots u\dots v” cannot arise in the cyclic sequence, it follows the cyclic order of uncovered petals around $\partial R^0_{ij}$ consists of a sequence of petals from $\NCpetal(U^0_i)$ followed by a sequence from $\NCpetal(V^0_j)$. As a result, we can find a line segment  $\overline{p_1p_2}$  lying 
in $\interior(R^0_{ij})$ whose two endpoints are on $\partial R^0_{ij}$ such that all the uncoverd petals of  $U^0_i$ (formally $\NCpetal(U^0_i)$) lie on one side of this line segment and the uncoverd petals of  $V^0_j$ (formally $\NCpetal(V^0_j)$) lie on the other side. 
In other words, extension of this line segment $\overline{p_1p_2}$ partitions the plane into two half-spaces ${\mathcal{H}}^0_i$ and ${\mathcal{H}}^0_j$ where ${\mathcal{H}}^0_i$ contains all the petals of $\NCpetal({U^0_i})$ and ${\mathcal{H}}^0_j$ contains all the petals of $\NCpetal({V^0_j})$. 
We define ${U^0_{ij}}={\mathcal{H}}^0_i\cap {U^0_i}$ and ${V^0_{ji}}={\mathcal{H}}^0_j\cap {V^0_j}$. 
The line segment  $\overline{{p}_1{p}_2}$  plays the role of the separating line segment $E^0_{ij}$. 
Claim~(i) follows because ${p}_1$ and  ${p}_2$ lie on the boundary of both $U_i^0$ and $V_j^0$. Claim~(ii) follows because $U_i^0 \setminus U _{ij}^0$ consists a portion of $R_{ij}^0$ (which clearly lies in $V_j$) together with a subset of petals of $U_i^0$ that are all covered by $V_j$ . Claim~(iii) is symmetrical. Hence ${U^0_{ij}}$, ${V^0_{ij}}$ satisfy the lemma. 
\begin{figure}
  \centering
   \includegraphics[width=\linewidth,scale=1.0,page=1]{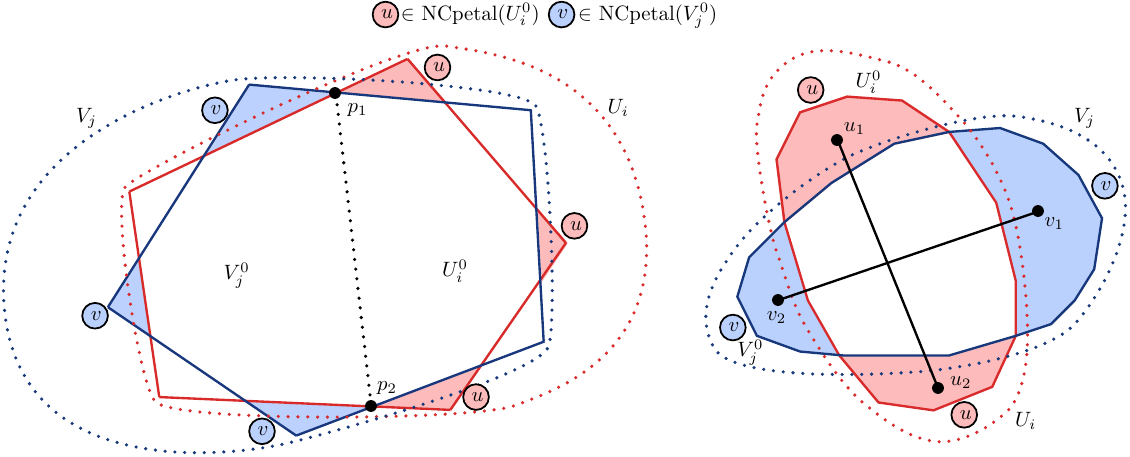}
   \caption{Illustration
   of Lemma~\ref{petalLemma}.}
   \label{fig_petalLemma}
\end{figure}
 \end{proof}

\section{Dominating-Set for Homothetic Convex Objects}\label{convPoly}
 
Let $C$ be a convex object in the plane.  We fix an arbitrary interior point of $C$ as 
the   {\it center}  $c(C)$. We are given a 
set $\mathscr{S}$ of $n$ homothetic (i.e., translated and uniformly scaled) copies of $C$, 
and our objective is to show that  the local-search algorithm given in Section~\ref{S-Alg} produces a PTAS for the minimum dominating-set for $\mathscr{S}$. 
Recall that ${\cal A}$  is  the set of objects returned by the  local-search algorithm, and  ${\cal O}$ is a minimum dominating-set. Without loss of generality, we assume that   both Claim~\ref{cD0.0} and \ref{cD0.1} are satisfied.  

In this section, we show mainly the existence of a planar graph 
satisfying the locality condition mentioned in Lemma~\ref{lem_loc_dom}.  Here is an overview of the proof.
First, we  find a disjoint sub-decomposition $\widetilde{\cal A}\cup \widetilde{\cal O}$ of ${\cal A}\cup {\cal O}$ (in Lemma~\ref{lDecompose2}).
Next, we  consider a nearest-site Voronoi diagram for the sites in $\widetilde{\cal A}\cup \widetilde{\cal O}$ with respect to a distance function.
Then we show (in Lemma~\ref{loc4}) that the dual of this  Voronoi diagram satisfies the locality condition mentioned in Lemma~\ref{lem_loc_dom}.  

\subsection{Decomposing into Interior Disjoint  Convex Sites}\label{SDecompose2}
Using  Lemmas~\ref{lDecompose1} and \ref{petalLemma} as tools, now we prove the following which is one of  the 
important observations of our work.

\begin{lemma}\label{lDecompose2}
Let  ${\cal A}$ be the output of the local-search algorithm for dominating-set on a set $\mathscr{S}$ of homothetic convex objects, and let   ${\cal O}$ be the optimum dominating-set. 
Then there exists
a disjoint sub-decomposition  $\widetilde{\cal A}\cup \widetilde{\cal O}$ which 
satisfies the following:
for any input  object $S\in \mathscr{S}$  
either
\begin{enumerate}[label=(\roman*)]
 \item  there exist  $\widetilde{A}\in \widetilde{\cal A}$ and 
$\widetilde{O}\in\widetilde{\cal O}$ such that  $S\cap \widetilde{A}\neq 
\emptyset$ and  $S\cap \widetilde{O}\neq \emptyset$, or
\item there exist 
 ${A}\in {\cal A}$ and 
${O}\in{\cal O}$ such that  $S\cap {A} \cap {O}\neq \emptyset$, and their traces
$\widetilde{A}$ and $\widetilde{O}$  share an edge  on their boundary.
\end{enumerate}
\end{lemma}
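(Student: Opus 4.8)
The plan is to build the decomposition in two stages and then read off the required dichotomy from the containment properties supplied by Lemma~\ref{petalLemma}. First I would use that $\mathcal{Q}'\subseteq\mathcal{Q}$ and $\mathcal{OPT}'\subseteq\mathcal{OPT}$ are each cover-free: deleting objects only enlarges a cover-free region, so $CF(U,\mathcal{Q}')\supseteq CF(U,\mathcal{Q})\neq\emptyset$, and likewise for $\mathcal{OPT}'$. Since homothets of a convex body are convex pseudodisks, I may apply Lemma~\ref{lDecompose1} separately to $\mathcal{Q}'$ and to $\mathcal{OPT}'$, obtaining disjoint union decompositions $\mathcal{U}^0=\{U_i^0\}$ and $\mathcal{V}^0=\{V_j^0\}$ with $CF(U_i,\mathcal{Q}')\subseteq U_i^0\subseteq U_i$, $\bigcup_i U_i^0=\bigcup\mathcal{Q}'$, and the analogous statements for $\mathcal{V}^0$. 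After this stage the pieces are pairwise interior disjoint \emph{within} each class, but a piece $U_i^0$ may still overlap a piece $V_j^0$ of the other class, so $\mathcal{U}^0\cup\mathcal{V}^0$ is only a convex decomposition of the union, not yet a \emph{disjoint} one in the sense of Definition~\ref{dcdDef}.

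The second stage removes the remaining cross-class overlaps. For every pair $(i,j)$ with $\interior(U_i^0)\cap\interior(V_j^0)\neq\emptyset$ (and both combined cover-free regions nonempty, the case that matters) I would invoke Lemma~\ref{petalLemma} to obtain the separating edge $E_{ij}$ and the two opposing half-planes $H_{ij}\supseteq CF(U_i^0,\mathcal{U}^0\cup\mathcal{V}^0)$ and $H_{ji}\supseteq CF(V_j^0,\mathcal{U}^0\cup\mathcal{V}^0)$. Since a piece may overlap several pieces of the opposite class, I would cut it by \emph{all} the relevant half-planes at once, setting $\widetilde{U_i}=U_i^0\cap\bigcap_j H_{ij}$ and $\widetilde{V_j}=V_j^0\cap\bigcap_i H_{ji}$; each such set is convex, being a finite intersection of convex sets with half-planes, and is contained in its original object. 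I would then verify the requirements of a disjoint convex decomposition: the subset and convexity properties are immediate, and for interior-disjointness the within-class pairs are inherited from $\mathcal{U}^0$ and $\mathcal{V}^0$, while an overlapping cross-class pair is separated because $\widetilde{U_i}\subseteq H_{ij}$ and $\widetilde{V_j}\subseteq H_{ji}$ lie in opposite half-planes bounded by $E_{ij}$; non-overlapping cross-class pairs stay disjoint since shrinking cannot create an overlap.

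It remains to establish the dichotomy for an arbitrary $S\in\mathscr{S}'$. Because $S$ is not dominated by $\mathcal{Q}\cap\mathcal{OPT}$ while both $\mathcal{Q}$ and $\mathcal{OPT}$ are dominating sets, $S$ meets some $U\in\mathcal{Q}'$ and some $V\in\mathcal{OPT}'$; by the union property a witness $x\in S\cap U$ lies in some $U_i^0$ and a witness $y\in S\cap V$ lies in some $V_j^0$. If $S$ already meets some final piece $\widetilde{U_i}$ and some final piece $\widetilde{V_j}$, then alternative (i) holds. Otherwise I would track where the trimmed mass went: if $x\notin\widetilde{U_i}$, then $x$ was removed by some half-plane $H_{ij}$, so by property (iii) of Lemma~\ref{petalLemma}, $x\in U_i^0\setminus U_{ij}^0\subseteq V_j$; thus $x\in S\cap U_i\cap V_j$, giving $S\cap U_i\cap V_j\neq\emptyset$ with $U_i\in\mathcal{Q}'$ and $V_j\in\mathcal{OPT}'$, and since $U_i^0,V_j^0$ overlapped, $\widetilde{U_i}$ and $\widetilde{V_j}$ share the separating edge $E_{ij}$, which is alternative (ii).

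The main obstacle I expect is the bookkeeping of the \emph{simultaneous} cuts in the second stage. Cutting $U_i^0$ by a single half-plane is controlled by Lemma~\ref{petalLemma}, but intersecting several $H_{ij}$ at once could in principle trim a separating edge $E_{ij}$ away entirely, which would break the shared-edge conclusion needed for alternative (ii); I would need the conflict-free ordering of petals established inside the proof of Lemma~\ref{petalLemma} to guarantee that the several separating edges of one piece are mutually compatible, so that each $E_{ij}$ survives (at least partially) on $\partial\widetilde{U_i}\cap\partial\widetilde{V_j}$ and neither $CF(U_i,\mathcal{Q}')$ nor $CF(V_j,\mathcal{OPT}')$ is ever cut off. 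A secondary point to dispatch is the handful of pieces whose combined cover-free region is empty, i.e.\ already covered by the rest; these never require a separating edge, and any $S$ meeting such a piece is routed to its coverer, so they do not affect the dichotomy.
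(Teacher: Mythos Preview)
Your two-stage plan mirrors the paper's construction through what it calls Steps~1 and~2: apply Lemma~\ref{lDecompose1} to each class separately, then use Lemma~\ref{petalLemma} on every overlapping cross-class pair and intersect the resulting half-planes. The paper, however, does not stop there, and the reason is precisely the obstacle you flag in your last paragraph, which your proposed fix does not actually overcome.

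The conflict-free property you invoke from the proof of Lemma~\ref{petalLemma} concerns the petals of a \emph{single} pair $(U_i^0,V_j^0)$ along $\partial(U_i^0\cap V_j^0)$; it says nothing about how the chords $E_{ij}$ and $E_{ij'}$ for different~$j$ interact inside $U_i^0$, nor about how $E_{ij}$ and $E_{i'j}$ interact inside $V_j^0$. The paper's Claim~\ref{cD3} exhibits the failure mode explicitly: $E_{ij}$ can be absent from $\partial V_j^1$ because some $E_{i'j}$ cuts deeper into $V_j^0$ (the partial boundary $\Delta V_{ji'}$ swallows $\Delta V_{ji}$). When that happens your sentence ``since $U_i^0,V_j^0$ overlapped, $\widetilde{U_i}$ and $\widetilde{V_j}$ share the separating edge $E_{ij}$'' is simply false, and alternative~(ii) breaks. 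The paper's remedy is not to argue that every $E_{ij}$ survives --- it does not --- but to run two further passes (Steps~3 and~4): re-cut each $U_i^0$ only against those $V_j^1$ that still meet $\partial U_i^0$, producing $U_i^2$ whose boundary edges are then guaranteed (Claim~\ref{cD6}) to be shared with the corresponding $V_j^1$; then perform the symmetric expansion/contraction on the $V$-side. Only after this iterative repair does the tracking statement you need (the paper's Claim~\ref{cD9}) hold: the $V_j$ that absorbs a point lost from $U_i^0$ is one whose final piece genuinely shares an edge with $\widetilde{U_i}$.

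A second, smaller gap: your handling of pieces with empty combined cover-free region is too casual. Lemma~\ref{petalLemma} requires $CF(U_i^0,\mathcal U^0\cup\mathcal V^0)\neq\emptyset$ and $CF(V_j^0,\mathcal U^0\cup\mathcal V^0)\neq\emptyset$ as hypotheses, so you cannot even invoke it on such a pair; ``routing $S$ to the coverer'' then leaves you with no well-defined $\widetilde{U_i}$ and hence no edge to share. The paper resolves this up front in Step~1 by an $\epsilon$-shrink of selected shared edges among the $V^o$ pieces (and symmetrically the $U^o$ pieces) so that every combined cover-free region becomes nonempty while the set of input objects each piece meets is unchanged; this is where Claim~\ref{cD0.1} and the second \textsf{while}-loop of Algorithm~\ref{ALG-Code} enter.
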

Remainder of this section is devoted to the proof of this lemma. As a continuation  from Section~\ref{anls}, we would like to remind the reader that  duplicate objects have been pruned from ${\cal A}$ and ${\cal O}$.
   
Let ${\cal A}=\{A_1,\ldots,A_{\ell}\}$  and ${\cal O}=\{O_1,\ldots,O_t\}$.
Our algorithm to obtain a disjoint
sub-decomposition  $\widetilde{\cal A}\cup \widetilde{\cal 
O}=\{\widetilde{A}_1,\ldots 
\widetilde{A}_{\ell}\}\cup\{\widetilde{O}_1,\ldots \widetilde{O}_{t}\}$ for 
${\cal A}\cup {\cal O}$ satisfying the lemma statement is  as follows.

{\bf Step 1: Obtaining  decompositions individually:} 
Note that the objects in ${\cal A}$ (resp., ${\cal 
O}$) are cover-free (follows from Claim~\ref{cD0.0}). So, we  apply  Lemma~\ref{lDecompose1} on the set ${\cal A}$ (resp., ${\cal 
O}$) of
objects, to compute 
the disjoint decomposition of ${\cal A}$ (resp., ${\cal {O}}$).
Let ${\cal A}^0=\{A^0_{1}, \ldots, 
A^0_{\ell}\}$ (resp., ${\cal O}^0=\{O_{1}^0, \ldots, 
O_{t}^0\}$ ) be the  disjoint decomposition of 
${\cal A}$ (resp., ${\cal {O}}$). Now, following claim is obvious.

\begin{claim}\label{cD0}
Any point $p\in \IR^2$ is contained in the interior of  at most two objects 
of  ${\cal 
A}^0\cup {\cal O}^0$.
\end{claim}

Lemma~\ref{lDecompose1} ensures that   $\CF({A_i},{\cal A})\subseteq A^0_i \neq \emptyset$ and
 $\CF({O_j},{\cal O})\subseteq O^0_j\neq \emptyset$ for all $i\in [\ell]$, $j\in[t]$.  
By Claim~\ref{cD0.1}, no object $A^0_i$ can be properly contained in any single object from ${\cal O}^0$, but it may be completely covered by the union of  two or more objects from ${\cal O}^0$. We can remedy this as follows.

Replace each object of ${\cal A}^0$ and ${\cal O}^0$ with an infinitesimally shrunken version of itself. By our assumption of general position, the resulting sets of shrunken objects still form dominating-sets. Furthermore, because the elements of ${\cal O}^0$ have pairwise disjoint interiors, no single object of ${\cal A}^0$ can be contained in the union of two or more of the shrunken objects in ${\cal O}^0$. Henceforth, ${\cal A}^0$ and ${\cal O}^0$ refer to the sets of shrunken objects. Thus we have the following.

\begin{claim}\label{cD0.3}
\begin{itemize}
\item[(i)] $\CF(A^0_i, {\cal A}^0\cup {\cal O}^0)\neq \emptyset$ for all $i\in[\ell]$,
\item[(ii)] $\CF(O^0_j, {\cal A}^0\cup {\cal O}^0)\neq \emptyset$ for all $j\in[t]$,

\item[(iii)] For each object $S\in {\cal S}$, there exist an  object  $A_i^0\in{\cal A}^0$ (resp., $O_j^0\in{\cal O}^0$)
such that $S\cap A_i^0 \neq \emptyset$ (resp., $S\cap O_j^0\neq \emptyset$).
\end{itemize}
\end{claim}

{\bf Step 2: Obtaining disjoint sub-decomposition:} Now, consider $A^0_{i}\in {\cal A}^0$ for all 
$i\in [\ell]$. 
Lemma~\ref{lDecompose1} ensures that $A^0_{i}$ does not have any interior overlap with $A^0_{k}$, for any $k \in [\ell]\setminus i$. 
Similarly, ${O^0_j}$ ($j\in [t]$) does not have any  interior overlap with $O_k^0$, for any $k \in [t]\setminus j$. 
But, $A^0_{i}$ may have interior overlap with one or more objects of ${\cal O}^0$.
Let $L(i)$ be the subset of indices $j\in [t]$ such that $A_i^0$ has an interior overlap with $O_j^0$. 
For any $j \in L(i)$, Claim~\ref{cD0.3} implies that both $\CF(A^0_i, {\cal A}^0\cup {\cal O}^0)\neq \emptyset$ and $\CF(O^0_j, {\cal A}^0\cup {\cal O}^0)\neq \emptyset$. 
By applying Lemma~\ref{petalLemma} to $A^0_{i}$ and 
${O^0_j}$, we obtain two interior-disjoint convex objects $A_{ij}^0 \subseteq A_i^0$ and $O_{ji}^0 \subseteq O_j^0$. Let $A^1_i=\bigcap_{j \in L(i)}A_{ij}^0$.
 Similarly, let $M(j)$ be the subset of indices $i \in [l]$ such that $O_j^0$ has an interior overlap with $A_i^0$.  Let $O^1_j=\bigcap_{i\in M(j)} O^0_{ji} $ which is a convex object and it contains  $\CF(O_j)$.
Let  ${\cal A}^1=\{A^1_{1}, \ldots, A^1_{\ell}\}$ and   ${\cal O}^1=\{{O^1_{1}}, \ldots, {O^1_{t}}\}$. 
Clearly, $A_i^1 \subseteq A_i^0$ and $O_j^1 \subseteq O_j^0$, and since separating line segments $E_{ij}^0$ have eliminated all overlaps between the two decompositions, it follows that ${\cal A}^1 \cup {\cal O}^1$ is a disjoint sub-decomposition of  ${\cal A}\cup {\cal O}$.
If we concentrate on  the arrangements of all $E^0_{ij}$ along the boundary of 
$\partial A^0_{i}$, then we  observe the following.

\begin{claim}\label{cD1}
Any two separating line segments $E^0_{ij}$ and $E^0_{ij'}$  do not intersect each 
other.
\end{claim}
\begin{proof}
  If  $E^0_{ij}$ and $E^0_{ij'}$ intersect each 
other then assertions (ii) and (iii) of Lemma~\ref{petalLemma} imply that the corresponding  objects 
${O^0_j}$ and  $O_{j'}^0$ also intersect, which is not possible because ${\cal O}^0$ is a disjoint  
decomposition. 
\end{proof}

The boundary $\partial A^1_{i}$ is actually obtained by replacing zero or more disjoint arcs of $\partial A^0_{i}$ with  separating line segments. Since each of these separating line segments are part of different disjoint objects in ${\cal O}^0$, here we would like to remark that the object $A^1_{i}$ is nonempty. For the similar reason,  each object $O^1_{j}\in {\cal O}^1$ is nonempty.
We denote the {\it partial boundary} $\Delta {A^0_{ij}}$ (resp., $\Delta {O^0_{ji}}$ ) by 
the portion 
of the boundary $\partial A^0_{i}$ (resp., 
$\partial {O^0_j}$) which is replaced by the edge   
$E^0_{ij}$ (see Figure~\ref{fig_empty}(b) where partial boundary is marked as 
dotted).

Note the following.

\begin{claim}\label{cD3}
Let $A^0_i$ and $O^0_j$ be any two  objects from ${\cal A}^0$ and ${\cal O}^0$, respectively, such that
$\interior({{A}^0_i})\cap 
\interior({{O}^0_j})\neq 
\emptyset$ and  $E^0_{ij}$  is not a part of $\partial A^1_i$. Then  following properties must be satisfied:

\begin{itemize}
\item there exists an object $O^0_{j'}$ in ${\cal O}^0$ such that $\interior({{A}^0_i})\cap 
\interior({{O}^0_{j'}})\neq 
\emptyset$,  $E^0_{i{j'}}$  is  a part of $\partial A^1_i$, and   ${A}^0_i \setminus {A}^0_{ij}$ is completely contained in ${O}_{j'}$.
\item $O^0_j$ does not intersect $A^1_i$.
\end{itemize}

\end{claim}
\begin{proof}
Claim~\ref{cD1} implies that that no two separating line segments intersect 
each other, {so the fact that $E^0_{ij}$ does not contribute to $\partial A^1_i$ implies that there is another object ${O^0_{j'}}$ such that the}
partial-boundary  $\Delta {A^0_{ij'}}$ contains the partial boundary 
$ \Delta{A^0_{ij}}$.  Thus, $A^0_{ij'}\subseteq A^0_{ij}$ which implies ${A}^0_i  \setminus  A^0_{ij} \subseteq {A}^0_i \setminus A^0_{ij'}$.  Since  $ {A}^0_i \setminus A^0_{ij'}$ is completely contained in ${O}_{j'}$ (by Lemma~\ref{petalLemma}), ${A}^0_i  \setminus  A^0_{ij}$ is also completely contained in ${O}_{j'}$.

Since ${O^0_{j}}$ and ${O^0_{j'}}$ are interior disjoint and the
partial-boundary  $\Delta {A^0_{ij'}}$ contains the partial boundary 
$ \Delta{A^0_{ij}}$, ${O^0_{j}}$ cannot intersect ${A^1_{i}}$.
Hence, the claim follows.
\end{proof}
{By a symmetrical argument,
we have the following.}
\begin{claim}\label{cD4}
Let $A^0_i$ and $O^0_j$ be any two  objects from ${\cal A}^0$ and ${\cal O}^0$, respectively, such that
$\interior({{A}^0_i})\cap 
\interior({{O}^0_j})\neq 
\emptyset$ and  $E^0_{ji}$  is not a part of $\partial O^1_j$. Then  following properties must be satisfied:

\begin{itemize}
\item there exists an object $A^0_{i'}$ in ${\cal A}^0$ such that $\interior({{O}^0_j})\cap 
\interior({{A}^0_{i'}})\neq 
\emptyset$,  $E^0_{j{i'}}$  is  a part of $\partial O^1_j$, and   ${O}^0_j \setminus {O}^0_{ji}$ is completely contained in ${A}_{i'}$.
\item $A^0_i$ does not intersect $O^1_j$.

\end{itemize}

\end{claim}

\begin{figure}
    \centering
    \begin{minipage}{.3\linewidth}
        \centering
       \includegraphics[width=\linewidth,scale=.08,page=1]{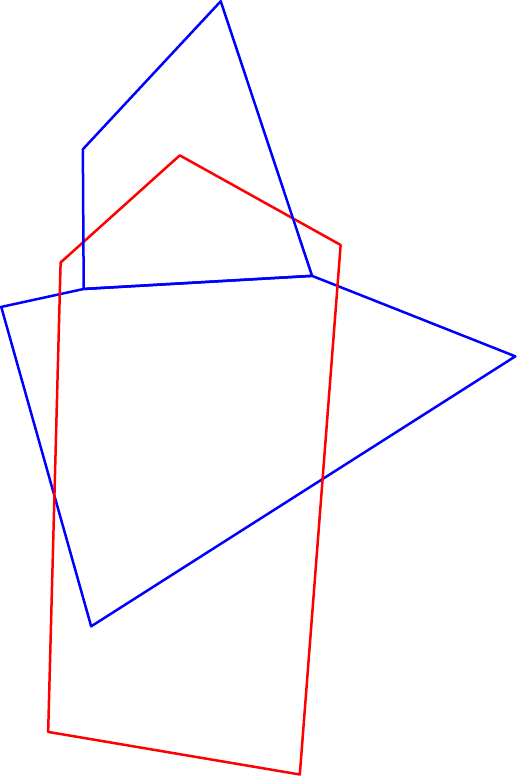}
	  \subcaption{After Step 1}
         \end{minipage}%
    \begin{minipage}{0.3\linewidth}
        \centering
            \includegraphics[width=\linewidth,scale=.08,page=2]{nfig2}
                 \subcaption{After Step 2 (Partial boundaries are shown dotted and the separating line segments are marked in green)}
    \end{minipage}
      \begin{minipage}{0.3\linewidth}
        \centering
                   \includegraphics[width=\linewidth,scale=.08,page=3]{nfig2} %
                      \subcaption{After Step 3}    
                              \end{minipage}
     \caption{Illustration of different steps: objects in $\cal A$ and $\cal O$ are marked with red and blue, respectively. }
      \label{fig_empty}
\end{figure}

Note that after this step, there might be some point $p\in {{A}^0_i}$ but $p\notin {{A}^1_i}$ and there does not exist any ${{O}^1_j}$ such that  $p \in {{O}^1_j}$  (see Figure~\ref{fig_empty}(a-b)). 
Hence, the objects of ${\cal A}^1 \cup {\cal O}^1$ fail to cover the same region as ${\cal A}^0 \cup {\cal O}^0$, as needed in the decomposition. 
To remedy this, we expand some of the objects in ${\cal A}^1$ and ${\cal O}^1$ in the next step.

{\bf Step 3: Expansion of objects in ${\cal A}^1$ and  ${\cal O}^1$:}

For each $(i, j) \in [\ell]\times [t]$, define $\chi(i, j) = 1$ if $E^0_{ij}$ is a part of $\partial A^1_i$ and $ E^0_{ji}$ is also a part of $\partial O^1_j$, and it is $0$ otherwise.  
Recalling $A^0_{ij}$ and $O^0_{ji}$ from Lemma~\ref{petalLemma}, for each $i\in[\ell]$, define $A^2_i=\bigcap\limits_{\{j| \chi(i, j) = 1\} } A^0_{ij} $, and for each $j\in [t]$,  define  $O^2_j=\bigcap\limits_{\{i| \chi(i, j) = 1\} }
O^0_{ji} $. 
Let  ${\cal A}^2=\{A^2_{1}, \ldots, 
A^2_{\ell}\}$ and  ${\cal O}^2=\{{O^2_{1}}, \ldots, 
{O^2_{t}}\}$. Note that ${\cal A}^2 \cup {\cal O}^2$ is a disjoint 
sub-decomposition of  ${\cal A}\cup {\cal O}$.
This construction 
along with Claims~\ref{cD3} and~\ref{cD4}  ensures the following.

\begin{claim}\label{cD9}
  \begin{itemize}
   \item For any point $p\in A^0_i\setminus A^2_i$, 
   there exists some ${O^2_j} \in {\cal O}^2$ such that  ${A^2_i}$ and ${O^2_j} $ share an edge on their boundary and $ p\in  O_j$.

\item For any point $p\in O^0_j\setminus O^2_j$, there exists  some ${A^2_i} \in {\cal A}^2$ such that  ${A^2_i}$ and  ${O^2 _j}$ share an edge on their boundary and $ p\in  A_i$.
\end{itemize}
\end{claim}

By renaming each set $A^2_i$ as $\widetilde{A}_i$ for $i \in [\ell]$ and each $O_j^2$ as $\widetilde{O}_j$ for $j \in [t]$, we obtain the final decomposition $\widetilde{\cal A}\cup \widetilde{\cal 
O}= {\cal A}^2\cup {\cal O}^2$. Finally, we claim the following which completes the proof of the lemma statement.

\begin{claim}\label{cD10}
 For any input  object $S\in \mathscr{S}$  
either (i) there exist  $\widetilde{A}\in \widetilde{\cal A}$ and 
$\widetilde{O}\in\widetilde{\cal O}$ such that  $S\cap \widetilde{A}\neq 
\emptyset$ and  $S\cap \widetilde{O}\neq \emptyset$, or (ii) there exist 
 ${A}\in {\cal A}$ and 
${O}\in{\cal O}$ such that  $S\cap {A} \cap {O}\neq \emptyset$, and 
$\widetilde{A}$ and $\widetilde{O}$  share an edge on their boundary.
\end{claim}

\begin{proof}
  Let $S$ be any input object in $\mathscr{S}$. From Claim~\ref{cD0.3} (iii), we know that  there exist 
$A^0_i\in {\cal A}^0$ and $O^0_j\in {\cal O}^0$ such 
that  $S\cap A^0_i\neq 
\emptyset$ and  $S\cap O^0_j\neq \emptyset$ for some $i\in[\ell]$ 
and $j\in [t]$. If after Step 3, $S\cap {A^2_i}\neq 
\emptyset$ and  $S\cap {O^2_j}\neq \emptyset$, then the claim 
follows. So without loss of generality assume that $S\cap {A^2_i}= 
\emptyset$. Consider any point $p\in S\cap A^0_i$. As $p\in  
A^0_i\setminus {A^2_i}$, there exist some ${O^2_j} \in  {\cal O}^2 $ 
such that ${A^2_i}$ and 
${O^2_j} $ share an edge on their boundary and $p\in O_{j}$ 
(follows from Claim~\ref{cD9}). Thus the claim follows.
\end{proof}

\subsection{Nearest-site Voronoi diagram} \label{nvd}

Recalling the definition of the convex distance function $\delta_{C}$ from Definition~\ref{def_1_cd}, we define the distance $\delta_{C}(p,P)$ from  a point $p$ to any 
 object $P$ (which need not be convex and homothetic to $C$) as follows.

\begin{definition}
 Let $p$ be a point and $P$ be an object in a plane. The distance  
$\delta_{C}(p,P)$ from   $p$ to  
 $P$ is defined as $\delta_{C}(p,P)=\min\limits_{q\in 
P}\delta_C(p,q)$.
\end{definition}

This distance function has the following properties.

\begin{property}\label{p5}
 \begin{itemize}
  \item[(i)] If $p$ is contained in the object $P$, then  
$\delta_{C}(p,P)=0$.
\item[(ii)] If $\delta_{C}(p,P)>0$, then $p$ is outside the object $P$, 
and a translated copy of $C$ centered at $p$ with 
 scaling factor $\delta_{C}(p,P)$  touches the object $P$.
 \end{itemize}

\end{property}

Now, we define a nearest-site Voronoi diagram~$\NVD_C$ for all 
the  objects in $\widetilde{\cal A}\cup \widetilde{\cal O}$ with respect 
to the distance 
function 
${\delta}_C$.
 
We define Voronoi cell of ${S}_i\in \widetilde{\cal A}\cup 
\widetilde{\cal O}$ as $\Cell({S}_i)=\{p\in \IR^2| {\delta}_C(p,{S}_i)\leq {\delta}_C(p,{S}_j) \textit{~for all } j\neq i\}$. The $\NVD_C$ is a partition on the plane 
 imposed by the collection of cells of all the objects in $\widetilde{\cal A}\cup 
\widetilde{\cal O}$.
A point $p$ is in $\Cell({S})$ for some object ${S}\in 
\widetilde{\cal A}\cup \widetilde{\cal O}$, 
implies that 
if we place a homothetic copy of $C$ centered at $p$ with a scaling factor 
$\delta_C(p,S)$, then $C$ touches  ${S}$ and the 
interior of $C$ is empty.
Now,  we have the following two lemmas.

 \begin{lemma}\label{lll1}
 The cell of every object ${S}\in \widetilde{\cal A}\cup 
\widetilde{\cal O}$ is nonempty. 
Moreover, 
  ${S} \subseteq  
\Cell({S})$. 
\end{lemma}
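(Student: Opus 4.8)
The plan is to read off both claims directly from the characterisation of the convex distance $\mathfrak{d}_C$ in Property~\ref{p5}, using only that $\widetilde{\cal Q}'\cup \widetilde{\cal OPT}'$ is a \emph{disjoint} convex decomposition (by Lemma~\ref{lDecompose2}), i.e.\ its members are pairwise interior disjoint convex objects. The entire argument reduces to comparing, for a point $p\in\widetilde{S}$, the value $\mathfrak{d}_C(p,\widetilde{S})$ with $\mathfrak{d}_C(p,\widetilde{S}')$ for every other site $\widetilde{S}'$; the key observation is that $p$ is never \emph{strictly} closer to a competing site.

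First I would establish the containment $\widetilde{S}\subseteq Cell(\widetilde{S})$. Fix $p\in\widetilde{S}$; by Property~\ref{p5}(i) we have $\mathfrak{d}_C(p,\widetilde{S})=0$, the minimum possible value. In the interior case $p\in\interior(\widetilde{S})$, interior-disjointness forces $p$ to lie in no other site $\widetilde{S}'$—otherwise a small ball around $p$ contained in $\widetilde{S}$ would meet $\interior(\widetilde{S}')$—so Property~\ref{p5}(ii) gives $\mathfrak{d}_C(p,\widetilde{S}')>0=\mathfrak{d}_C(p,\widetilde{S})$. Hence $p\in Cell_{\widetilde{S}'}(\widetilde{S})$ for \emph{every} $\widetilde{S}'$, and therefore $p\in Cell(\widetilde{S})$, which is by definition the intersection of these regions. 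This already yields $\interior(\widetilde{S})\subseteq Cell(\widetilde{S})$.

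It then remains to treat $p\in\partial\widetilde{S}$. If $p$ lies in no other site the strict-inequality argument above applies verbatim. Otherwise $p$ lies on a boundary edge shared with some site $\widetilde{S}'$—the only way two closed interior-disjoint convex objects can meet—and then $\mathfrak{d}_C(p,\widetilde{S}')=0=\mathfrak{d}_C(p,\widetilde{S})$, a tie that the lexicographic rule in the definition of $Cell_{\widetilde{S}'}(\widetilde{S})$ resolves; since against every competing site the distance is either strictly larger or tied, $p$ still lies in $Cell(\widetilde{S})$ (equivalently in its closure $\widehat{Cell(\widetilde{S})}$, matching the closure-based description of $NVD_C$). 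The non-emptiness claim is then immediate: $\widetilde{S}$ is a convex object and so has non-empty interior, and $\interior(\widetilde{S})\subseteq Cell(\widetilde{S})$ forces $Cell(\widetilde{S})\neq\emptyset$.

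The main obstacle is precisely the boundary case: ensuring that points on an edge shared by two adjacent sites are handled consistently, since a single shared edge cannot physically lie in the relative interior of both cells at once. The substantive content—that no point of $\widetilde{S}$ is strictly closer to a different site—is exactly interior-disjointness combined with Property~\ref{p5}(ii), and the interior argument already secures both non-emptiness and the bulk of the containment; once that is in place, only the bookkeeping of the lexicographic tie-breaking on shared edges (and the convention of working with closed cells as in the definition of $NVD_C$) needs care.
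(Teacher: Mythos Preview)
Your proposal is correct and follows essentially the same approach as the paper: the paper's proof is a one-line appeal to Property~\ref{p5}(i) together with the interior-disjointness of $\widetilde{\cal Q}'\cup \widetilde{\cal OPT}'$ supplied by Lemma~\ref{lDecompose2}, and you have simply unpacked that appeal in detail. Your treatment of the boundary case and the lexicographic tie-breaking is in fact more careful than the paper's terse argument, which does not explicitly address shared-edge points at all.
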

\begin{proof}

This follows from  Property~\ref{p5}(i) and the fact that $\widetilde{\cal A}\cup 
\widetilde{\cal O}$ is a set of 
interior disjoint  objects (from Lemma~\ref{lDecompose2}(a)).
\end{proof}

 \begin{lemma}\label{lll2}
 Each cell  $\Cell({S})$  is simply connected. 

\end{lemma}

\begin{proof}
For every $S\in \widetilde{\cal A}\cup 
\widetilde{\cal O}$, let us define the function $\pi_S \colon \mathbb{R}^2 \to S$, that maps any point to one of its closest points in $S$. (If $p \in S$, then $\pi_S(p) = p$.)

We first claim that for every point $p  \in \Cell(S)$, the line segment $\overline{p\pi_S(p)} \subseteq \Cell(S)$. To see this, suppose to the contrary that there exists a point $q \in\overline{p\pi_S(p)}$  such that $q\in \Cell(S')$ where $S('\neq S)\in \widetilde{\cal A}\cup 
\widetilde{\cal O}$. Then by basic
properties of convex distance functions (Property~\ref{prop:convex_dist}), we have
\[\delta_C(p, S')  \leq \delta_C(p, \pi_{S'}(q)) \leq \delta_C(p,q) + \delta_C(q, \pi_{S'}(q)) < \delta_C(p,q) + \delta_C(q, \pi_S(p)) = \delta_C(p, \pi_S(p)),\] contradicting the fact that $p\in \Cell(S)$.

To see that $\Cell(S)$ is connected, observe that any two points $p, p' \in \Cell(S)$ can be connected as follows. 
First, connect $p$ to $\pi_S(p)$ and $p'$ to $\pi_S(p')$. 
Then connect these two points through
$S$. 
By the above claim and Lemma~\ref{lll1}, all of these segments lies within $\Cell(S)$.

To complete the proof that $\Cell(S)$ is simply connected, we use the well known equivalent characterization~\cite{KleinW88} that for any simple closed (i.e., Jordan) curve $\Psi \subset \Cell(S)$, the interior of the region bounded by this curve lies entirely within $\Cell(S)$. 
Consider any $x$ in the interior of the region bounded by $\Psi$. 
Either $x \in S$ or (by extending the ray from $\pi_S(x)$ through $x$ until it hits $\Psi$) there exists $p \in \Cell(S)$ such that $x$ lies on the line segment $\overline{p\pi_S(x)}$. In the former case, $x\in \Cell(S)$, follows from Lemma~\ref{lll1}. Now, we are going to argue that $x\in \Cell(S)$ for the latter case as well. To see this, suppose to the contrary that $x\in \Cell(S')$ where $S('\neq S)\in \widetilde{\cal A}\cup 
\widetilde{\cal O}$.
Then by basic
properties of convex distance functions (Property~\ref{prop:convex_dist}), we have
\[\delta_C(p, S')  \leq \delta_C(p, \pi_{S'}(x)) \leq \delta_C(p,x) + \delta_C(x, \pi_{S'}(q)) < \delta_C(p,x) + \delta_C(x, \pi_S(p)) = \delta_C(p, \pi_S(p)),\] contradicting the fact that $p\in \Cell(S)$. Therefore $x \in \Cell(S)$, as desired. 
\end{proof}

\subsection{Locality Condition}
\label{loc-cond}
 
Let us consider the graph  ${\cal G}=({\cal V},{\cal E})$,  the \emph{dual} of the Voronoi diagram $\NVD_C$, whose  
  vertices $\mathcal{V}$ are the elements of $\mathcal{A} \cup \mathcal{O}$ and the edge set $\mathcal{E}$ consists of pairs $U, V \in \mathcal{V}$ whose Voronoi cells share an edge on their boundaries.  
 From 
Lemma~\ref{lll1} and Lemma~\ref{lll2},  we have the following.

\begin{lemma}
 The graph ${\cal G}=({\cal A}\cup {\cal O}, {\cal E})$  is a planar graph.
\end{lemma}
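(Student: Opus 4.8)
The plan is to establish planarity of $\mathcal{G}$ by realizing it as the dual of the nearest-site Voronoi diagram $NVD_C$, invoking the two structural facts already proved about the Voronoi cells. The core principle is standard: the dual of a planar subdivision in which each cell is a connected region is itself a planar graph, provided the cells tile the plane without ``crossing'' adjacencies. So the whole argument reduces to checking that $NVD_C$ is a genuine planar subdivision of $\mathbb{R}^2$ into well-behaved cells, and then arguing that the adjacency dual inherits planarity.

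First I would record that the cells $\{Cell(\widetilde{S}) : \widetilde{S} \in \widetilde{\mathcal{Q}}' \cup \widetilde{\mathcal{OPT}}'\}$ partition the plane: every point $p \in \mathbb{R}^2$ is assigned to exactly one site (the nearest under $\mathfrak{d}_C$, with ties broken lexicographically by the ordering $\prec$, as specified in the definition of $NVD_C$). By Lemma~\ref{lll1} each cell is non-empty and contains its defining site, and by Lemma~\ref{lll2} each cell is simply connected. Together these say the plane is decomposed into a finite collection of simply-connected, non-empty regions whose common boundaries form the curves $J(\widetilde{S_i}, \widetilde{S_j})$. This is exactly the combinatorial setup of an abstract Voronoi diagram in the sense of Klein, for which the diagram is known to have the structure of a planar subdivision.

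Next I would define the vertex set of $\mathcal{G}$ to be the sites (equivalently $\mathcal{Q}' \cup \mathcal{OPT}'$, since each object corresponds to one shrunk site) and place an edge between two sites exactly when their cells share a boundary arc of positive length, i.e.\ when $\widehat{Cell(\widetilde{S_i})} \cap \widehat{Cell(\widetilde{S_j})}$ is a curve rather than a point. The standard way to embed this dual in the plane is to draw, for each adjacency, a curve from a chosen interior point of $Cell(\widetilde{S_i})$ (say $c(\widetilde{S_i})$, which lies in the cell by Lemma~\ref{lll1}) through the shared boundary arc into $Cell(\widetilde{S_j})$. Because each cell is simply connected, these dual arcs can be routed inside the two incident cells without self-intersection, and two distinct dual edges meet only at a common endpoint (a site); this yields a crossing-free embedding, which is the definition of planarity.

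The main obstacle I expect is the step of guaranteeing that the dual edges do not cross, and this is precisely where simple-connectedness (Lemma~\ref{lll2}) does the heavy lifting: if the cells could be multiply connected or could ``wrap around'' one another, the dual arcs could be forced to intersect. A secondary subtlety is that $NVD_C$ is built from a convex \emph{distance function} $\delta_C$ that is generally non-symmetric, so one cannot quote the classical Euclidean Voronoi planarity result verbatim; one must appeal instead to the theory of Voronoi diagrams under convex distance functions (McAllister, Kirkpatrick, Snoeyink~\cite{McAllisterKS96}), where each bisector $J(\widetilde{S_i},\widetilde{S_j})$ is an unbounded simple curve and the diagram is planar of linear complexity. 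Once one cites that the bisectors behave as simple separating curves and that the cells are simply connected, the Euler-formula / Jordan-curve argument that the adjacency dual of a planar subdivision is planar goes through routinely, and the lemma follows.
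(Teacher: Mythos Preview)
Your proposal is correct and follows the same route as the paper: the graph $\mathcal{G}$ is the dual of $NVD_C$, and planarity is derived from Lemma~\ref{lll1} (cells are non-empty and contain their sites) and Lemma~\ref{lll2} (cells are simply connected). The paper in fact states the lemma as an immediate consequence of these two lemmata without spelling out the dual-of-a-planar-subdivision argument; you have simply filled in the standard details that the paper leaves implicit.
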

 
Now,  we prove that the graph ${\cal G}$ satisfies the property needed in the locality condition (Lemma~\ref{lem_loc_dom}).

\begin{lemma}\label{loc4}
  For any arbitrary input object $S  \in  \mathscr{S}$, if $S$ is dominated by at least one object of $\cal A$ and at least one object of $\cal O$, then there exists $A\in {\cal A}$ and $O\in {\cal O}$ both of which dominate $S$ and  $(A,O)\in {\cal E}$ of ${\cal G}$.
 \end{lemma}
 \begin{proof}
  Let $S$ be any object in $\mathscr{S}$.
According to Lemma~\ref{lDecompose2}, there exists a disjoint sub-decomposition
$\widetilde{\cal A} \cup \widetilde{\cal O}$ such that either:
\begin{itemize}
 \item[(i)]  there exist  $\widetilde{A}\in \widetilde{\cal A}$ and 
$\widetilde{O}\in\widetilde{\cal O}$ such that  $S\cap \widetilde{A}$ and  $S\cap \widetilde{O}$ are both nonempty, or 
 \item[(ii)]  there exist 
 ${A}\in {\cal A}$ and 
${O}\in{\cal O}$ such that  $S\cap {A} \cap {O}\neq \emptyset$, and their respective traces
$\widetilde{A}$ and $\widetilde{O}$ share an edge in common on their boundaries.
\end{itemize}

For case (ii), clearly both $A$ and $O$ dominates $S$. The fact that $\widetilde{A}$ and $\widetilde{O}$ share a common edge on their boundary implies (by Lemma~\ref{lll1}) that $\Cell(\widetilde{A})$ and $\Cell(\widetilde{O})$ also share a common edge on their boundaries. 
Therefore, $(A, O)$ is an edge of $\mathcal{G}$, as desired. 

For case (i), let $c = c(S)$ denote the center of $S$. 
Without loss of generality, we may assume that $A$ and $O$ have been chosen so that $\widetilde{A}$ and $\widetilde{O}$ are the closest objects to $c$ (with respect to $\delta_C$) in $\widetilde{\mathcal{A}}$ and $\widetilde{\mathcal{O}}$, respectively. 
We may assume that $\delta_C(c, \widetilde{A}) \leq \delta_C(c, \widetilde{O})$ (as the other case is symmetrical). 

Let $o \in \widetilde{O}$ denote the closest point to $c$ in $\widetilde{O}$. 
Clearly, $c$ and $o$ lie in different Voronoi cells, so this segment must intersect an edge of $\Cell(\widetilde{O})$ at some point $p$. Let $\Cell(\widetilde{R})$ denote the cell neighbouring the $\Cell(\widetilde{O})$ along this edge. Letting $r$ denote the closest point to $p$ in $\widetilde{R}$, we have $\delta_C(p, r) = \delta_C(p, \widetilde{R})= \delta_C(p, \widetilde{O})\leq \delta_C(p, o)$. By basic properties of convex distance function (see Property~\ref{prop:convex_dist}) we obtain 
\[ \delta_C(c,r)\leq \delta_C(c,p) + \delta_C(p,r) \leq \delta_C(c,p)+\delta_C(p,o) = \delta_C(c,o).\]
By general position, we may assume that $\delta_C(c, \widetilde{R}) < \delta_C(c, \widetilde{O})$.
Since $\widetilde{O}$ was chosen to be the closest object in $\widetilde{O}$ to $c$, it follows that $\widetilde{R} \in \widetilde{\mathcal{A}}$. 
Clearly, the associated objects $R$ and $O$ (which contain $\widetilde{R}$ and $\widetilde{O}$, respectively) both dominates $S$.
Therefore, there is an edge $(R, O)$ in $\mathcal{G}$, as desired. 
\end{proof}

\section{Dominating-Set for Homothets of a Centrally Symmetric Convex Object}\label{Appendix}
In this section, we give a simpler analysis of the local search algorithm for the dominating-set problem when 
the objects are homothets of a centrally symmetric  convex object.  Our analysis is a generalization of Gibson et al. \cite{GibsonP10} where we can avoid the sophisticated tool of disjoint decomposition. 

Let $C$ be a centrally symmetric convex object in the plane with the center $c(C)$.
Given a set  $\mathscr{S}$ of homothets  of  $C$, our objective is to show that  the local-search algorithm given in Section~\ref{S-Alg} is a PTAS for the minimum dominating-set for $\mathscr{S}$.  
  Recall  that ${\cal A}$  is  the set of objects returned by the  local-search algorithm, and  ${\cal O}$ is the minimum dominating-set. As a continuation from Section~\ref{S-Alg}, we assume that both Claim~\ref{cD0.0} and \ref{cD0.1} are satisfied.  

As in Section~\ref{nvd}, we define a nearest-site Voronoi diagram for all objects in ${\cal A}\cup {\cal O}$ with respect to a distance function $\delta_{C}^{*}$.
First, we are going to extend the convex distance function to provide meaningful (albeit negative) to the interior of each site. This would allow us to interpret the Voronoi diagram as a Voronoi diagram of additively weighted points, rather than a Voronoi diagram of (unweighted) regions. For each  object $S\in \mathscr{S}$, we define the \emph{weight} $w(S)$ to be $\alpha$, where $S=c(S)+\alpha C$.
  Now, we define the  distance $\delta_{C}^{*}(p,S)$ between a point $p \in \mathbb{R}^2$ and an object $S\in \mathscr{S}$ as follows: $\delta_{C}^{*}(p,S)=\delta_C(p,c({S}))-w(S)$. The  distance function $\delta_{C}^{*}(p,S)$ has the following properties:

\begin{property}
\label{p6}
\begin{enumerate}[label=(\roman*)]
\item The distance function $\delta_{C}^{*}(p,S)$ achieves its minimum value when $p=c(S)$.
\item If $p$ is contained in the object $S$, then
$\delta_{C}^{*}(p,S)\leq 0$.
\item If $\delta_{C}^{*}(p,S)>0$, then $p$ is outside the object $S$, 
and a translated copy of $C$ centered at $p$ with 
 scaling factor $\delta_{C}^{*}(p,S)$  touches the object $S$.
\end{enumerate}
\end{property}
Note that Property~\ref{p6}(iii) is crucial for our analysis and it follows due to the symmetric property of $\delta_C$. As a result, this approach cannot
be applied when objects are not centrally symmetric.

 We will show that each object in ${\cal A}\cup {\cal O}$ has a nonempty cell in this Voronoi diagram and each cell is simply connected. As a result the graph  ${\cal G}=({\cal V},{\cal E})$ which  is the dual of this Voronoi diagram is planar. Finally, we will show that this graph satisfies the locality condition mentioned in Lemma~\ref{lem_loc_dom}. This completes the proof.

\begin{lemma}\label{Al1}
 The cell of every object ${S}\in {\cal A}\cup 
{\cal O}$ is nonempty. 
Moreover, 
 the center $c({S}) \subseteq  
\Cell({S})$. 
\end{lemma}
\begin{proof}
For the sake of contradiction, assume for some object $S\in{\cal A}\cup {\cal O}$, $c(S)\notin \Cell(S)$ and $c(S)\in \Cell(S')$ where $S'(\neq S)\in {\cal A}\cup {\cal O}$. So, $\delta_{C}^{*}(c(S),S)\geq \delta_{C}^{*}(c(S),S')$. Since $\delta_{C}^{*}(c(S),S)=-w(S)$, we have $-w(S)\geq \delta_{C}(c(S),c(S'))-w(S')$. This implies $w(S')\geq\delta_{C}(c(S),c(S')) +w(S)$ which means that the object $S$ is contained in the object $S'$. This contradicts Claim~\ref{cD0.0} and~\ref{cD0.1}. 
\end{proof}

 \begin{lemma}\label{Al2}
 Each cell  $\Cell({S})$  is simply connected.
 \end{lemma}

\begin{proof}
We first claim that for every point $p  \in \Cell(S)$, the line segment $\overline{pc(S)} \subseteq \Cell(S)$. To see this, suppose to the contrary that there exists a point $q \in\overline{pc(S)}$  such that $q\in \Cell(S')$ where $S'(\neq S)\in {\cal A}\cup {\cal O}$. Then by basic
properties of convex distance functions (Property~\ref{prop:convex_dist}), we have
\[\delta_C^{*}(p, S')= \delta_C(p,c(S'))-w(S') \leq \delta_C(p,q)+ \delta_C(q,c(S'))-w(S')
\leq \delta_C(p,q) + \delta_C^{*}(q, S')\] \[ < \delta_C(p,q) + \delta_C^{*}(q, S) =\delta_C(p,q) + \delta_C(q,c(S))-w(S)= \delta_C(p,c(S))-w(S)= \delta_C^{*}(p, S),\] contradicting the fact that $p\in \Cell(S)$.

To see that $\Cell(S)$ is connected, observe that any two points $p, p' \in \Cell(S)$ can be connected via  $c(S)$ as follows. 
First, connect $p$ to $c(S)$  and then connect $p'$ to $c(S)$. 
 By the above claim and Lemma~\ref{Al1}, all of these segments lies within $\Cell(S)$.

To complete the proof that $\Cell(S)$ is simply connected, we use the well known equivalent characterization~\cite{KleinW88} that for any simple closed (i.e., Jordan) curve $\Psi \subset \Cell(S)$, the interior of the region bounded by this curve lies entirely within $\Cell(S)$. 
Consider any $x$ in the interior of the region bounded by $\Psi$. 
Either $x= c(S)$ or (by extending the ray from $c(S)$ through $x$ until it hits $\Psi$) there exists $p \in \Cell(S)$ such that $x$ lies on the line segment $\overline{pc(S)}$. In the former case, $x\in \Cell(S)$, follows from Lemma~\ref{Al1}. For the latter case, by the above claim (that $\overline{pc(S)} \subseteq \Cell(S)$), we have  $x\in \Cell(S)$. This completes the proof.
\end{proof}

\begin{lemma}\label{Aloc4}
  For any arbitrary input object $S  \in  \mathscr{S}$, there is an edge 
between 
$(A,O)\in {\cal G}$ such that $A\in {\cal A}$ and $O\in {\cal O}$, and both 
$A$ and $O$ dominates $S$. 
 \end{lemma}
 \begin{proof}
The proof is similar to the Case (i) of Lemma~\ref{loc4}.
\end{proof}


 \section{Geometric Set-Cover for Convex Pseudodisks}\label{setCover}
 
 Given a set  $\mathscr{S}$ of $n$ convex  pseudodisks and a set $\cal P$ of points in $\IR^2$, the objective is to cover all the points in $\cal P$ using subset of $\mathscr{S}$ of minimum cardinality.
Here, we analyze that the local search algorithm, as given in Section~\ref{S-Alg}, would give a polynomial time approximation scheme. 
The analysis is similar to the previous problem. 
Recall from Section~\ref{anls} that   ${\cal O}$ is an optimal covering set for $\cal P$ and ${\cal A}$ 
is  the covering set returned by our local search algorithm satisfying both  Claim~\ref{cD0.0} and \ref{cD0.1}. 
Here, we need to show that the locality condition mentioned in 
Lemma~\ref{lem_loc_cov}  is satisfied.

If we restrict the proof  of Lemma~\ref{lDecompose2} up to Claim~\ref{cD9}, then, it is straightforward to obtain the following.

\begin{lemma}\label{set_cover}
Let ${\cal A}$ be the output of the local-search algorithm for set-cover on a set $\mathscr{S}$ of 
convex  pseudodisks and a set $\cal P$ 
of points in 
$\IR^2$,  and let   ${\cal O}$ be the optimum. 
Then  there exists a 
disjoint sub-decomposition  $\widetilde{\cal A}\cup \widetilde{\cal O}$ 
which 
satisfies the following: for any input point $p\in {\cal P}$  
 there exist 
 ${A}\in {\cal A}$ and 
${O}\in{\cal O}$ such that  $p\in {A}$ and $p \in {O}$, and their traces
$\widetilde{A}$ and $\widetilde{O}$  share an edge on their boundary.
\end{lemma}



\begin{proof}

Let ${\cal A}=\{A_1,\ldots,A_{\ell}\}$  and ${\cal O}=\{O_1,\ldots,O_t\}$. Our algorithm to obtain a disjoint
sub-decomposition  $\widetilde{\cal A}\cup \widetilde{\cal 
O}=\{\widetilde{A}_1,\ldots 
\widetilde{A}_{\ell}\}\cup\{\widetilde{O}_1,\ldots \widetilde{O}_{t}\}$ for 
${\cal A}\cup {\cal O}$ satisfying the lemma statement is  exactly same as the three steps mentioned in Section~\ref{SDecompose2} for Lemma~\ref{lDecompose2}. 
The main difference is in the statement of Claim~\ref{cD4}. For set-cover problem, we have the following 

\begin{claim}\label{sD0.3}
\begin{itemize}
\item[(i)] $\CF(A^0_i, {\cal A}^0\cup {\cal O}^0)\neq \emptyset$ for all $i\in[\ell]$,
\item[(ii)] $\CF(O^0_j, {\cal A}^0\cup {\cal O}^0)\neq \emptyset$ for all $j\in[t]$,

\item[(iii)] Each point $p\in {\cal P}$ is covered by exactly one object from  ${\cal A}^0$ (resp., ${\cal O}^0$).
\end{itemize}
  
\end{claim}

Finally, instead of Claim~\ref{cD10}, we claim the following statement.

\begin{claim}\label{sD10_c}
 For any input point $p \in {\cal P}$,  there exist 
 ${A}\in {\cal A}$ and 
${O}\in{\cal O}$ such that  $p\in {A}$ and $p \in {O}$, and 
$\widetilde{A}$ and $\widetilde{O}$  share an edge on their boundary.
\end{claim}

\begin{proof}
  Let $p$ be any input point in ${\cal P}$. By Claim~\ref{sD0.3} (iii), there exist 
$A^0_i\in {\cal A}^0$ and $O^0_j\in {\cal O}^0$ such 
that  $p \in A^0_i$ and  $p \in O^0_j$ 
for some $i\in[\ell]$ 
and $j\in [t]$. After Step 3, since  ${\cal 
A}^2\cup {\cal O}^2$ is a disjoint decomposition of ${\cal A}\cup {\cal O}$,  $p$ cannot be both in 
${A^2_i}$ and ${O^2_j}$.  Therefore,    either of the 
following happens: $p \notin 
{A^2_i}$, or $p \notin {O^2_j}$.
In both  cases, the claim follows  from Claim~\ref{cD9}.
\end{proof}   

Thus the lemma follows.
  \end{proof}


Now,  consider  a graph ${\cal G}=({\cal V},{\cal E})$, where  each vertex 
$V\in {\cal V}$ corresponds to an object in $\widetilde{\cal A}\cup 
\widetilde{\cal O}$,  and we create an edge in between two vertices whenever the corresponding objects in $\widetilde{\cal A}\cup \widetilde{\cal O}$ share an edge in their boundary. 
Since, the objects of $\widetilde{\cal A}\cup \widetilde{\cal O}$ are convex and have disjoint interiors, this graph is a  planar graph.  
From Lemma~\ref{set_cover}, it follows that  the graph ${\cal G}$ satisfies the {\it locality condition} mentioned in  Lemma~\ref{lem_loc_cov}.
This completes the proof of Theorem~\ref{Thm:SetCover}.

\section{Concluding Remarks}
In this paper, we have shown that the well-known local search algorithm gives a PTAS for  finding the minimum cardinality dominating-set and geometric set-cover when the objects are homothetic convex objects, and  convex pseudodisks, respectively. 
As a consequence, we obtain easy to implement approximation guaranteed algorithms for a broad class of objects which encompasses arbitrary squares, $k$-regular polygons,  translates of  convex polygons. 
A QPTAS is known for the weighted set-cover problem where objects are pseudodisks~\cite{Ray}. 
But,  no QPTAS is known for the weighted dominating-set problem when objects are homothetic convex objects.  
Note that the separator-based arguments for finding PTAS has a limitation for handling the weighted version of the problems. 
Thus, finding a polynomial time approximation scheme for the weighted version of both minimum dominating-set and minimum geometric set-cover problems for  homothetic convex objects, pseudodisks remain open in this context. 
Specially, for the weighted version of the problem, it would be interesting to analyze the approximation guarantees of local search algorithm.

\section*{Acknowledgement}
The authors would like to acknowledge the reviewer for constructive feedback that improves the quality of the paper significantly.
We thank Mr A. B. Roy  for participating in the early discussions of this work.

\end{document}